\newcommand{\lyxaddress}[1]{
\par {\raggedright #1
\vspace{1.4em}
\noindent\par}
}
\theoremstyle{plain}
\newtheorem{thm}{\protect\theoremname}
  \theoremstyle{plain}
  \newtheorem{lem}[thm]{\protect\lemmaname}
  \theoremstyle{remark}
  \newtheorem{rem}[thm]{\protect\remarkname}
\newcommand{\se}{\mathsf{s}_{0}}
\newcommand{\so}{\mathsf{s}_{1}}
\newcommand{\cntx}{\mathsf{CNTX}}
\date{}
  \providecommand{\lemmaname}{Lemma}
  \providecommand{\remarkname}{Remark}
\providecommand{\theoremname}{Theorem}
\begin{document}

\begin{textblock}{10}(5,1)
\noindent\Large {Foundations of Physics 46, 282--299, 2016}
\end{textblock}

\title{Proof of a Conjecture on Contextuality in Cyclic Systems with Binary
Variables}

\author{Janne V. Kujala\textsuperscript{1} and Ehtibar N. Dzhafarov\textsuperscript{2} }

\maketitle

\lyxaddress{\begin{center}
\textsuperscript{1}University of Jyv\"askyl\"a, jvk@iki.fi\\\textsuperscript{2}Purdue
University, ehtibar@purdue.edu
\par\end{center}}
\begin{abstract}
We present a proof for a conjecture previously formulated by Dzhafarov,
Kujala, and Larsson (Found. Phys.  7, 762-782, 2015).
The conjecture specifies a measure for the degree of contextuality
and a criterion (necessary and sufficient condition) for contextuality
in a broad class of quantum systems. This class includes Leggett-Garg,
EPR/Bell, and Klyachko-Can-Binicioglu-Shumovsky type systems as special
cases. In a system of this class certain physical properties $q_{1},\ldots,q_{n}$
are measured in pairs $\left(q_{i},q_{j}\right)$; every property
enters in precisely two such pairs; and each measurement outcome is
a binary random variable. Denoting the measurement outcomes for a
property $q_{i}$ in the two pairs it enters by $V_{i}$ and $W_{i}$,
the pair of measurement outcomes for $\left(q_{i},q_{j}\right)$ is
$\left(V_{i},W_{j}\right)$. Contextuality is defined as follows:
one computes the minimal possible value $\Delta_{0}$ for the sum
of $\Pr\left[V_{i}\not=W_{i}\right]$ (over $i=1,\ldots,n$) that
is allowed by the individual distributions of $V_{i}$ and $W_{i}$;
one computes the minimal possible value $\Delta_{\min}$ for the sum
of $\Pr\left[V_{i}\not=W_{i}\right]$ across all possible couplings
of (i.e., joint distributions imposed on) the entire set of random
variables $V_{1},W_{1},\ldots,V_{n},W_{n}$ in the system; and the
system is considered contextual if $\Delta_{\min}>\Delta_{0}$ (otherwise
$\Delta_{\min}=\Delta_{0}$). This definition has its justification
in the general approach dubbed Contextuality-by-Default, and it allows
for measurement errors and signaling among the measured properties.
The conjecture proved in this paper specifies the value of $\Delta_{\min}-\Delta_{0}$
in terms of the distributions of the measurement outcomes $\left(V_{i},W_{j}\right)$.
\\
\\
\textbf{Keywords:} CHSH inequalities; contextuality; criterion for
contextuality; Klyachko-Can-Binicioglu-Shumvosky inequalities; Leggett-Garg
inequalities; measurement bias; measurement errors; probabilistic
couplings; signaling. 
\end{abstract}

\section{Introduction}

According to Acín et al. \cite{Acin2015}, with only few exceptions,
literature on contextuality mostly concerns particular examples and
lacks general theory. Perhaps adding to the list of exceptions, two
recent papers \cite{KDL2015,DKL2015} present a theory of contextuality
that, although not entirely general, applies to a very broad class
of quantum systems. Defining context as the set of physical properties
that are measured conjointly, the novelty of this theory is in that
it applies non-trivially also in the presence of context-dependent
measurement biases (e.g., due to interactions/signaling, or imperfections
in the measurement procedure). When such context-dependent measurement
biases are present, the distribution of the measurement of a given
physical property may vary over different contexts. A system is considered
noncontextual if there exists a joint distribution of the measurements
of all contexts such that the measurements of a given physical property
over different contexts are (in a well-defined sense) maximally correlated.

This definition, formulated in Refs.~\cite{DKL2015,KDL2015}, applies
to all systems where each measurement has a finite number of possible
outcomes. Analytic and computational results, however, are confined
to the subclass of so-called \emph{cyclic system}s, where each physical
property appears in two different contexts, each context consists
of two different physical properties, and each measurement has two
possible outcomes. This class includes Leggett-Garg, EPR/Bell, and
Klyachko-Can-Binicioglu-Shumovsky type systems as special cases. As
the main result of Ref.~\cite{KDL2015}, a criterion (necessary and
sufficient condition) was derived for a system to be contextual given
the joint distributions of the measurements in each context. In Ref.~\cite{DKL2015}
a measure of the degree of contextuality was defined based on how
far the measurements of each physical property over different contexts
are from being maximally correlated. This measure has a theoretically
justified formulation (see below for details) and it can be used to
define a criterion of contextuality: a system is contextual if and
only if the degree of contextuality is positive. Computer-assisted
calculations were used in Ref.~\cite{DKL2015} to derive an expression
for the measure of contextuality for systems of $3$, $4$, or $5$
physical properties. Based on these, a general expression was conjectured
for any number $n\ge2$ of physical properties. Interestingly, the
criterion of contextuality implied by the expression conjectured in
Ref.~\cite{DKL2015} was somewhat simpler and inherently different
in form from the one derived analytically in Ref.~\cite{KDL2015},
although both must be equivalent if the conjecture is true. In this
paper, we show that the conjecture is indeed true, and thereby make
the results complete for the special class of cyclic systems.

\subsection{Terminology and notation}

A \emph{cyclic} (single cycle) system (see Figure 1) is defined as
a system of measured properties $q_{1},\ldots,q_{n}$ ($n\geq2$)
and measurement results (random variables) satisfying the following
conditions \cite{DKL2015,KDL2015}: 
\begin{enumerate}
\item the properties are measured in pairs $\left(q_{1},q_{2}\right),\ldots,\left(q_{n-1},q_{n}\right),\left(q_{n},q_{1}\right)$,
called \emph{contexts}, so that each property enters in precisely
two contexts;
\item the result of measuring $\left(q_{i},q_{i\oplus1}\right)$\emph{,}
$i=1,\ldots,n$, is a pair of jointly distributed $\pm1$ random variables
$\left(V_{i},W_{i\oplus1}\right)$, called a \emph{bunch} (with $\oplus$
denoting circular addition: $i\oplus1=i+1$ for $i<n$, and $n\oplus1=1$).
\end{enumerate}
\begin{figure}
\begin{centering}
\includegraphics[scale=0.35]{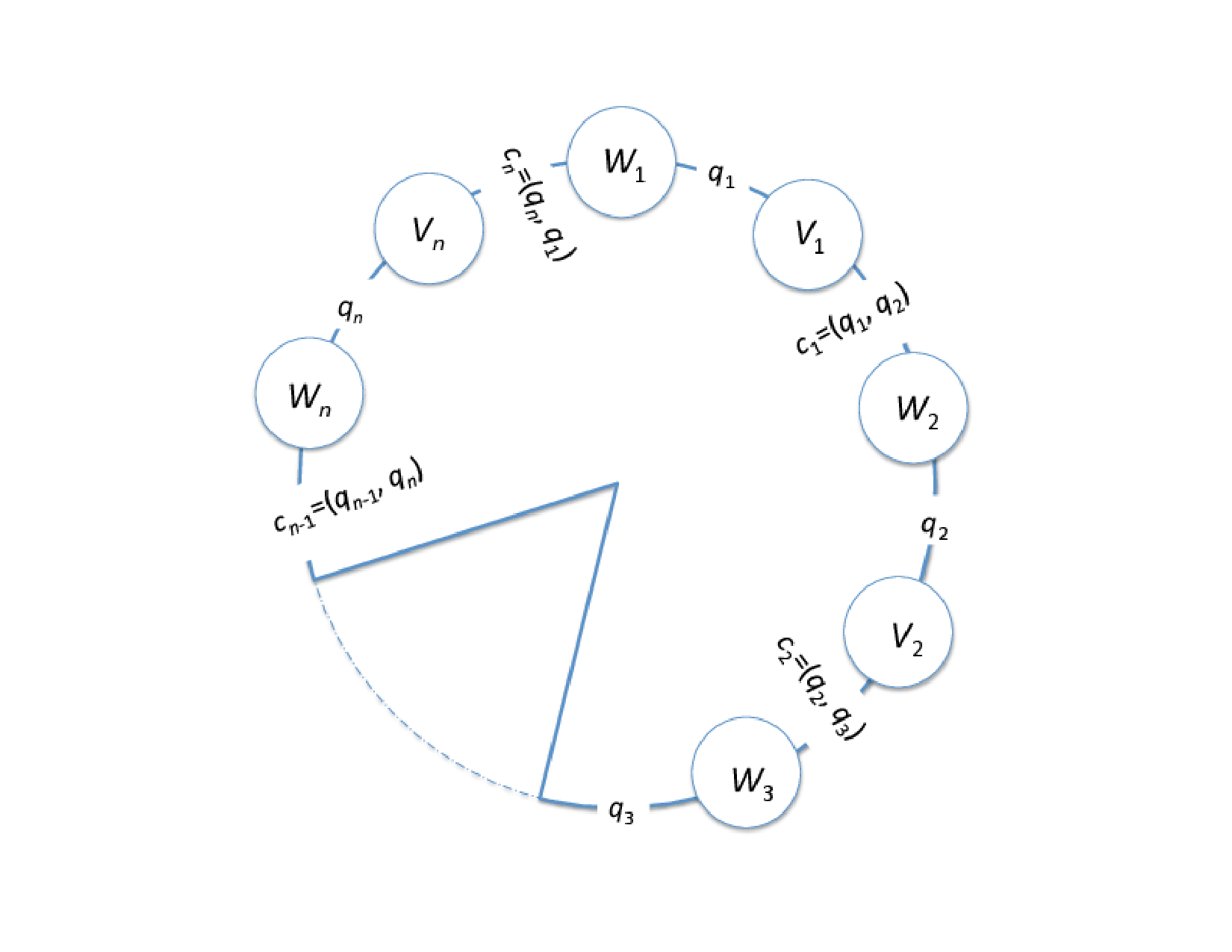}
\par\end{centering}

\protect\caption{A schematic representation of a cyclic system. For $i=1,\ldots,n$,
the measurement of $q_{i}$ is denoted $V_{i}$ if $q_{i}$ is measured
together with $q_{i\oplus1}$; and the measurement of $q_{i}$ is
denoted $W_{i}$ if $q_{i}$ is measured together with $q_{i\ominus1}$.
As a result, the observed pairs of random variables (``bunches'')
are $\left(V_{1},W_{2}\right)$, $\left(V_{2},W_{3}\right)$, ...,
$\left(V_{n},W_{1}\right)$. }
\end{figure}

The Leggett-Garg system \cite{11Leggett,SuppesZanotti1981}, the EPR/Bell
system \cite{10CH,15Fine,9CHSH,Bell1964,Bell1966}, and Klyachko-Can-Binicioglu-Shumovsky
system \cite{Klyachko} are cyclic systems with $n$ equal, respectively,
3, 4, and 5. See \cite{DKL2015,KDL2015} for details.

The distribution of every bunch $\left(V_{i},W_{i\oplus1}\right)$
is uniquely determined by the expectations $\left\langle V_{i}\right\rangle $,
$\left\langle W_{i\oplus1}\right\rangle $, and $\left\langle V_{i}W_{i\oplus1}\right\rangle $.
A pair $\left\{ V_{i},W_{i}\right\} $ of random variables, representing
the same property $q_{i}$ in different contexts, is called a \emph{connection}:
$V_{i},W_{i}$ are not jointly distributed, so the expectation $\left\langle V_{i}W_{i}\right\rangle $
is undefined.

\subsection{Contextuality}

A (probabilistic) coupling for random variables $X,Y,Z,\ldots$ is
defined as any random variable $\left(X^{*},Y^{*},Z^{*},\ldots\right)$
such that $X^{*},Y^{*},Z^{*},\ldots$ are distributed as, respectively
$X,Y,Z,\ldots$. By definition of a random variable (in the broad
sense of the term, including vectors and processes), the components
of $\left(X^{*},Y^{*},Z^{*},\ldots\right)$ are jointly distributed.
For simplicity, we omit asterisks and speak of a coupling $\left(X,Y,Z,\ldots\right)$
for $X,Y,Z,\ldots$ (imposing thereby, non-uniquely, a joint distribution
on $X,Y,Z,\ldots$ that otherwise may not have one). 

In relation to cyclic systems, we are interested in two types of couplings:
(1) couplings $\left(V_{i},W_{i}\right)$ for the connections $\left\{ V_{i},W_{i}\right\} $
($i=1,\ldots,n$); and (2) couplings $\left(\left(V_{1},W_{2}\right),\left(V_{2},W_{3}\right),\ldots,\left(V_{n},W_{1}\right)\right)$
for the set of the observed bunches $\left(V_{1},W_{2}\right),\left(V_{2},W_{3}\right),\ldots,\left(V_{n},W_{1}\right).$
The latter coupling can be written as a random variable $\left(V_{1},W_{2},V_{2},W_{3},\ldots,V_{n},W_{1}\right)$,
with the proviso that its 2-marginals $\left(V_{i},W_{i\oplus1}\right)$
are distributed as the corresponding bunches. Put differently, it
is the coupling for the \emph{entire system of random variables} that
agrees with the observed bunches.

For every $i=1,\ldots,n$, among all couplings for the connection
$\left\{ V_{i},W_{i}\right\} $, we consider one in which $\left\langle V_{i}W_{i}\right\rangle =1-\left|\left\langle V_{i}\right\rangle -\left\langle W_{i}\right\rangle \right|$.
This coupling is called \emph{maximal}, because $1-\left|\left\langle V_{i}\right\rangle -\left\langle W_{i}\right\rangle \right|$
is the maximum possible value for $\left\langle V_{i}W_{i}\right\rangle $
(with given $\left\langle V_{i}\right\rangle $ and $\left\langle W_{i}\right\rangle $).
Equivalently, in the maximal coupling for $\left\{ V_{i},W_{i}\right\} $
the probability $\Pr\left[V_{i}\not=W_{i}\right]$ attains its minimum
possible value: $\frac{1}{2}\left|\left\langle V_{i}\right\rangle -\left\langle W_{i}\right\rangle \right|$.
It is clear that all connections have maximal couplings if and only
if
\begin{equation}
\Delta=\sum_{i=1}^{n}\Pr\left[V_{i}\ne W_{i}\right]
\end{equation}
is at its minimal possible value. We denote this value
\begin{equation}
\Delta_{0}=\frac{1}{2}\sum_{i=1}^{n}\left|\left\langle V_{i}\right\rangle -\left\langle W_{i}\right\rangle \right|.
\end{equation}

Clearly, in every coupling $\left(V_{1},W_{2},V_{2},W_{3},\ldots,V_{n},W_{1}\right)$
for the entire system the value of $\Delta$ is uniquely defined and
cannot fall below $\Delta_{0}$. This leads to the following definition
of a \emph{measure (degree) of contextuality}: it is
\begin{equation}
\cntx=\Delta_{\min}-\Delta_{0}\geq0,\label{eq:first principles}
\end{equation}
where $\Delta_{\min}$ is the smallest possible value of $\Delta$
across all couplings for the entire system. The system is \emph{contextual}
if $\cntx>0$, and it is not if $\cntx=0$. In Ref. \cite{KDL2015},
if a system is not contextual, it is said to have a \emph{maximally
noncontextual description}.

\subsection{Conjecture}

In Ref. \cite{DKL2015}, it was conjectured that
\begin{equation}
\Delta_{\min}=\frac{1}{2}\max\begin{cases}
\so\left(\left\langle V_{i}W_{i\oplus1}\right\rangle :i=1,\dots,n\right)-\left(n-2\right),\\
\sum_{i=1}^{n}\left|\left\langle V_{i}\right\rangle -\left\langle W_{i}\right\rangle \right|,
\end{cases}\label{eq:conjectured Delta_min}
\end{equation}
where the function $\so\left(x_{1},\dots,x_{k}\right)$ for any $k>1$
real-valued arguments is defined as
\begin{equation}
\so\left(x_{1},\dots,x_{k}\right)=\max\sum_{i=1}^{k}m_{i}x_{i},\label{eq:s1}
\end{equation}
with the maximum taken over all $m_{1}\ldots,m_{k}\in\left\{ -1,1\right\} $
such that $\prod_{i=1}^{k}m_{i}=-1$. It follows then that the measure
of contextuality is
\begin{equation}
\cntx=\frac{1}{2}\max\begin{cases}
\so\left(\left\langle V_{i}W_{i\oplus1}\right\rangle :i=1,\dots,n\right)-\sum_{i=1}^{n}\left|\left\langle V_{i}\right\rangle -\left\langle W_{i}\right\rangle \right|-\left(n-2\right),\\
0,
\end{cases}\label{eq:conjectured cntx}
\end{equation}
and a system is contextual if and only if
\begin{equation}
\so\left(\left\langle V_{i}W_{i\oplus1}\right\rangle :i=1,\dots,n\right)>\sum_{i=1}^{n}\left|\left\langle V_{i}\right\rangle +\left\langle W_{i}\right\rangle \right|+\left(n-2\right).\label{eq:conjectured criterion}
\end{equation}
In Ref. \cite{DKL2015} this was shown to be true for $n=3,4,5$,
but not generally. 

In Ref. \cite{KDL2015}, a different criterion for contextuality in
cyclic systems was derived: a system is contextual if and only if
\begin{equation}
\so\left(\left\langle V_{i}W_{i\oplus1}\right\rangle ,1-\left|\left\langle V_{i}\right\rangle -\left\langle W_{i}\right\rangle \right|:i=1,\dots,n\right)>2n-2.\label{eq:proved criterion}
\end{equation}
Of the two criteria, the conjectured (\ref{eq:conjectured criterion})
and the proved (\ref{eq:proved criterion}), the former is more specific,
as it is easy to see that (\ref{eq:proved criterion}) follows from
it (which means that it is known to be a necessary condition for contextuality).
However, this is not the main reason why one should be interested
in (\ref{eq:conjectured criterion}). The main reason is that (\ref{eq:conjectured criterion})
follows from a conjectured formula for the fundamental theoretical
quantity $\Delta_{\min}$ whose excess (\ref{eq:first principles})
over the minimum possible value $\Delta_{0}$ is used as a measure
of contextuality, whereas the degree of violation of (\ref{eq:proved criterion})
does not have any theoretically motivated interpretation as a measure
of contextuality.

\subsection{What we do in this paper}

It is easy to see that (\ref{eq:proved criterion}) and (\ref{eq:conjectured criterion})
are equivalent for \emph{consistently-connected} systems, i.e., those
with $\Delta_{0}=0$. Due to the results obtained in Ref. \cite{DKL2015},
the two criteria should also be equivalent for $n=3,4,5$. However,
it is easy to show by examples that the two inequalities, (\ref{eq:conjectured criterion})
and (\ref{eq:proved criterion}), are not algebraic variants of each
other: in particular, the expressions
\[
\so\left(\left\langle V_{i}W_{i\oplus1}\right\rangle :i=1,\dots,n\right)-\sum_{i=1}^{n}\left|\left\langle V_{i}\right\rangle -\left\langle W_{i}\right\rangle\right| -(n-2)
\]
and
\[
\so\left(\left\langle V_{i}W_{i\oplus1}\right\rangle ,1-\left|\left\langle V_{i}\right\rangle -\left\langle W_{i}\right\rangle \right|:i=1,\dots,n\right)
\]
are not equal to each other. 

Nevertheless, the two inequalities are equivalent, as we prove in
Section \ref{sec:The-equivalence-result} (Theorem \ref{thm:equivalence}).
That is, (\ref{eq:proved criterion}) is indeed a criterion as conjectured
in Ref. \cite{DKL2015}. In Section \ref{sec:Proof-of-the} we prove
the general formula (\ref{eq:conjectured cntx}) for the contextuality
measure (Theorem \ref{thm:measure}). Note that the criterion (\ref{eq:conjectured criterion})
is merely a consequence of the contextuality measure formula (\ref{eq:conjectured cntx}),
i.e., the logical derivability diagram is 
\[
\xymatrix{\textnormal{Definition of Contextuality \eqref{eq:first principles}}\ar[d]\\
\textnormal{Contextuality Measure Formula \eqref{eq:conjectured cntx}}\ar[d]\\
\textnormal{Criterion of Contextuality \eqref{eq:conjectured criterion}}
}
\]
However, in this paper we arrive at the formulas for the measure and
criterion in a more circuitous way. We first prove that the criterion
(\ref{eq:conjectured criterion}) is equivalent to the previously
derived criterion (\ref{eq:proved criterion}), and then we use (\ref{eq:conjectured criterion})
and the definition of contextuality to derive the measure formula
(\ref{eq:conjectured cntx}):

\emph{
\[
\xymatrix@C=2cm{\textnormal{Definition of Contextuality \eqref{eq:first principles}}\ar[r]^{\textnormal{Ref. [2]}}\ar[d]_{_{\textnormal{Theorem }\ref{thm:measure}}} & \textnormal{Criterion of Contextuality \eqref{eq:proved criterion}}\\
\textnormal{Contextuality Measure Formula \eqref{eq:conjectured cntx}} & \textnormal{Criterion of Contextuality \eqref{eq:conjectured criterion}}\ar[l]^{^{\quad\textnormal{Theorem }\ref{thm:measure}}}\ar@{<->}[u]_{_{\textnormal{Theorem }\ref{thm:equivalence}}}
}
\]
}

\section{Results we need for the proofs}

Here we list some results from Ref. \cite{KDL2015}. We make use of
the function $\so$ defined in (\ref{eq:s1}) and the function 
\begin{equation}
\se\left(x_{1},\dots,x_{k}\right)=\max\sum_{i=1}^{k}m_{i}x_{i},\label{eq:s0}
\end{equation}
with $k>1$ real-valued arguments, where the maximum is taken over
all $m_{1}\ldots,m_{k}\in\left\{ -1,1\right\} $ such that $\prod_{i=1}^{k}m_{i}=1$.
\begin{lem}
\label{lem:s0_s1_split}For any $a_{1},\dots,a_{n},b_{1},\dots,b_{m}\in\mathbb{R}$,
\[
\so(a_{1},\dots,a_{n},b_{1},\dots,b_{m})=\max\{\begin{array}[t]{l}
\se(a_{1},\dots,a_{n})+\so(b_{1},\dots,b_{m}),\\
\so(a_{1},\dots,a_{n})+\se(b_{1},\dots,b_{m})\,\}
\end{array}
\]
and
\[
\se(a_{1},\dots,a_{n},b_{1},\dots,b_{m})=\max\{\begin{array}[t]{l}
\se(a_{1},\dots,a_{n})+\se(b_{1},\dots,b_{m}),\\
\so(a_{1},\dots,a_{n})+\so(b_{1},\dots,b_{m})\,\}.
\end{array}
\]

\end{lem}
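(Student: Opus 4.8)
The plan is to exploit the observation that the sign constraint $\prod_{i}m_{i}=\pm1$ couples the $a$-arguments and the $b$-arguments only through the parity of the number of $-1$'s assigned within each group. Note first that $\se$ and $\so$ are invariant under permutations of their arguments, since permuting the $x_{i}$ and the corresponding $m_{i}$ preserves both the parity constraint and the value $\sum m_{i}x_{i}$; hence ``splitting off a block'' is legitimate even when the two groups of arguments are interleaved. Write a generic sign vector for the $n+m$ combined arguments as a pair $(\mathbf{m}^{a},\mathbf{m}^{b})$ with $\mathbf{m}^{a}\in\{-1,1\}^{n}$ and $\mathbf{m}^{b}\in\{-1,1\}^{m}$. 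Then the objective splits additively, $\sum_{i}m_{i}x_{i}=\sum_{i=1}^{n}m_{i}^{a}a_{i}+\sum_{j=1}^{m}m_{j}^{b}b_{j}$, while the product constraint reads $\bigl(\prod_{i}m_{i}^{a}\bigr)\bigl(\prod_{j}m_{j}^{b}\bigr)=\varepsilon$, with $\varepsilon=-1$ for $\so$ and $\varepsilon=+1$ for $\se$.

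First I would handle $\so(a_{1},\dots,a_{n},b_{1},\dots,b_{m})$. Here $\varepsilon=-1$, and a product of two $\pm1$'s equals $-1$ exactly when the two factors differ; so the feasible set of pairs $(\mathbf{m}^{a},\mathbf{m}^{b})$ is the union of the Cartesian products $\{\prod_{i}m_{i}^{a}=-1\}\times\{\prod_{j}m_{j}^{b}=+1\}$ and $\{\prod_{i}m_{i}^{a}=+1\}\times\{\prod_{j}m_{j}^{b}=-1\}$. Since the objective is the sum of a function of $\mathbf{m}^{a}$ alone and a function of $\mathbf{m}^{b}$ alone, maximizing over a Cartesian product equals the sum of the separate maxima: the first product set contributes $\so(a_{1},\dots,a_{n})+\se(b_{1},\dots,b_{m})$ and the second contributes $\se(a_{1},\dots,a_{n})+\so(b_{1},\dots,b_{m})$. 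As the maximum over a union of sets is the maximum of the two maxima, taking the larger of these quantities gives the first displayed identity. The identity for $\se$ is obtained in exactly the same way: now $\varepsilon=+1$, a product of two $\pm1$'s equals $+1$ exactly when the factors agree, so the feasible set is $\{\prod_{i}m_{i}^{a}=+1\}\times\{\prod_{j}m_{j}^{b}=+1\}$ together with $\{\prod_{i}m_{i}^{a}=-1\}\times\{\prod_{j}m_{j}^{b}=-1\}$, and independent maximization on each piece produces $\se(a_{1},\dots,a_{n})+\se(b_{1},\dots,b_{m})$ and $\so(a_{1},\dots,a_{n})+\so(b_{1},\dots,b_{m})$ respectively.

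I do not expect a genuine obstacle here: the whole argument is one parity decomposition followed by the trivial facts that a maximum over a Cartesian product of an additively separable function decouples, and that a maximum over a union splits as a max of maxes. The only point meriting a word of care is the degenerate case in which one of the two blocks has a single argument, for which one should read the definitions with $\se(x)=x$ and $\so(x)=-x$; with that convention the reasoning above is valid verbatim for all $n,m\ge1$, which is all that is needed below.
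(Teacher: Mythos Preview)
Your argument is correct; the parity decomposition you describe is exactly the natural proof of this identity. Note, however, that the paper does not supply its own proof of this lemma: it is listed in Section~2 among ``results we need for the proofs'' that are quoted from Ref.~\cite{KDL2015}, so there is no in-paper argument to compare against. Your write-up would serve as a self-contained proof. One small caveat: in this paper both $\se$ and $\so$ are formally defined only for $k>1$ arguments, so your closing remark about the single-argument convention $\se(x)=x$, $\so(x)=-x$ is an extension beyond the stated definitions; it is harmless and consistent, but if you want to stay strictly within the paper's conventions you could simply restrict to $n,m\ge2$, which is all that the subsequent applications require.
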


\begin{lem}
\label{lem:joint-of-two}Jointly distributed $\pm1$-valued random
variables $A$ and $B$ with given expectations $\left\langle A\right\rangle ,\left\langle B\right\rangle ,\left\langle AB\right\rangle $
exist if and only if
\[
\begin{cases}
-1\le\left\langle A\right\rangle \le1,\\
-1\le\left\langle B\right\rangle \le1,\\
\left|\left\langle A\right\rangle +\left\langle B\right\rangle \right|-1\le\left\langle AB\right\rangle \le1-\left|\left\langle A\right\rangle -\left\langle B\right\rangle \right|.
\end{cases}
\]

\end{lem}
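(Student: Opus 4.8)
The plan is to reduce the statement to an elementary feasibility question for a $2\times2$ contingency table, which can be solved completely explicitly. A joint distribution of the $\pm1$-valued pair $(A,B)$ is nothing but four numbers $p_{++},p_{+-},p_{-+},p_{--}\ge 0$ with $p_{++}+p_{+-}+p_{-+}+p_{--}=1$, where $p_{st}=\Pr[A=s,B=t]$. First I would write the three moments $\langle A\rangle,\langle B\rangle,\langle AB\rangle$ as linear functionals of these four probabilities. Together with the normalization constraint this is a system of four linear equations in the four unknowns $p_{st}$; since the corresponding $4\times4$ matrix (a scaled Hadamard matrix) is invertible, the system has the unique solution
\[
p_{st}=\tfrac14\bigl(1+s\langle A\rangle+t\langle B\rangle+st\langle AB\rangle\bigr),\qquad s,t\in\{-1,1\}.
\]

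Next I would observe that these four numbers automatically sum to $1$, so a joint distribution with the prescribed moments exists \emph{if and only if} all four of them are nonnegative, i.e.
\[
1+s\langle A\rangle+t\langle B\rangle+st\langle AB\rangle\ge 0\quad\text{for all }s,t\in\{-1,1\}.
\]
The remaining work is purely arithmetic: the two inequalities with $st=1$ (that is, $(s,t)=(1,1)$ and $(s,t)=(-1,-1)$) are equivalent to $\langle AB\rangle\ge\left|\langle A\rangle+\langle B\rangle\right|-1$, and the two with $st=-1$ are equivalent to $\langle AB\rangle\le 1-\left|\langle A\rangle-\langle B\rangle\right|$, which is exactly the third displayed condition of the lemma. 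Finally I would note that the first two conditions, $-1\le\langle A\rangle\le 1$ and $-1\le\langle B\rangle\le 1$, are implied by the third (chaining its two halves gives $\left|\langle A\rangle+\langle B\rangle\right|+\left|\langle A\rangle-\langle B\rangle\right|\le 2$, whose left-hand side equals $2\max\{|\langle A\rangle|,|\langle B\rangle|\}$); alternatively they follow directly by adding the appropriate pairs among the four nonnegativity inequalities.

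There is essentially no deep obstacle here — the statement is the classical solvability criterion for a $\pm1$ contingency table with fixed margins and correlation, and the whole argument is the explicit inversion above plus a case split on the sign of $st$. The only point requiring a little care is making sure the equivalence runs in both directions: necessity is immediate since the $p_{st}$ are forced, and sufficiency uses that once the four forced values are nonnegative they already constitute a bona fide probability distribution realizing the given moments. One may also present the same content more conceptually, noting that the set of achievable triples $(\langle A\rangle,\langle B\rangle,\langle AB\rangle)$ is the image of the probability simplex on four atoms under a linear map, hence a polytope whose facets are cut out precisely by the four constraints $p_{st}\ge 0$; I would, however, prefer the direct computation as it is shorter and self-contained.
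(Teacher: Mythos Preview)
Your argument is correct and is in fact the standard derivation of this classical criterion. Note, however, that the paper does not supply its own proof of this lemma: it is listed in Section~2 among results imported from Ref.~\cite{KDL2015}, so there is no paper proof to compare against. Your explicit inversion of the moment map, yielding $p_{st}=\tfrac14(1+s\langle A\rangle+t\langle B\rangle+st\langle AB\rangle)$ and then reading off the four nonnegativity constraints, is exactly how one would prove this from scratch, and your observation that the bounds on $\langle A\rangle$ and $\langle B\rangle$ are redundant (being implied by the third line) is a nice sharpening.
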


\begin{lem}
\label{lem:chain-joint-of-three}Given jointly distributed random
variables $\left(A,B\right)$ and jointly distributed random variables
$\left(B',C\right)$, there exists a coupling $(A,B,B',C)$ such that
$B=B'$ if and only if $B$ and $B'$ have the same distribution.
\end{lem}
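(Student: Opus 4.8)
The ``only if'' direction is immediate: in any coupling $(A,B,B',C)$ with $B=B'$ almost surely, the variables $B$ and $B'$ are equal and hence have the same distribution. The content is the ``if'' direction, which is an instance of the classical gluing of two couplings along a shared marginal, and the plan is to carry out that construction explicitly.

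Assume $B$ and $B'$ have the same distribution $\mu$. First I would disintegrate the joint law of $(A,B)$ over its second coordinate: let $\kappa(\cdot\mid b)$ denote the conditional distribution of $A$ given $B=b$, so that the law of $(A,B)$ equals $\int\kappa(\cdot\mid b)\,\mu(db)$. In the $\pm1$-valued case relevant here these are just the elementary conditional probabilities $\Pr[A=a\mid B=b]$ wherever $\Pr[B=b]>0$, with an arbitrary choice elsewhere. Likewise, disintegrate the law of $(B',C)$ over its first coordinate --- whose marginal is again $\mu$ --- to obtain a kernel $\lambda(\cdot\mid b)$ with law of $(B',C)$ equal to $\int\lambda(\cdot\mid b)\,\mu(db)$. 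Then I would define a joint distribution for $(A,B,B',C)$ by the recipe: draw $b$ from $\mu$; conditionally on $b$, draw $A$ from $\kappa(\cdot\mid b)$ and $C$ from $\lambda(\cdot\mid b)$ independently of each other; and set $B:=b$ and $B':=b$. This manifestly satisfies $B=B'$.

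It then remains to verify the two marginal requirements. Integrating out $B'$ and $C$ from the construction leaves $\int\kappa(\cdot\mid b)\,\mu(db)$, i.e. the prescribed law of $(A,B)$; integrating out $A$ and $B$ leaves $\int\lambda(\cdot\mid b)\,\mu(db)$, i.e. the prescribed law of $(B',C)$. Hence $(A,B,B',C)$ is the desired coupling. The only point requiring care --- and the only thing one could reasonably call an obstacle --- is the existence of the regular conditional distributions $\kappa$ and $\lambda$ used in the gluing step; for the finitely-valued variables that occur in cyclic systems this is trivial, and more generally the same argument goes through verbatim whenever the variables take values in standard Borel spaces. Everything else is a routine marginalization check.
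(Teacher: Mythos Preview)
Your argument is correct: this is the standard conditional-independence gluing of two couplings along a common marginal, and the verification of the marginals is straightforward. The paper does not actually supply a proof of this lemma --- it is listed in Section~2 among results imported from Ref.~\cite{KDL2015} --- so there is nothing to compare against beyond noting that your construction is exactly the classical one that underlies such ``chain'' lemmas.
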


\begin{lem}
\label{lem:chain-joint}Jointly distributed $\pm1$ random variables
$A_{1},\dots,A_{n}$ ($n\ge2$) with given expectations $\left\langle A_{1}\right\rangle ,\dots,\left\langle A_{n}\right\rangle $,$\left\langle A_{1}A_{2}\right\rangle ,\dots,\left\langle A_{n-1}A_{n}\right\rangle $
exist if and only if $A=A_{i}$ and $B=A_{i+1}$ satisfy the condition
of Lemma \ref{lem:joint-of-two} for all $i=1,\dots,n-1$.\end{lem}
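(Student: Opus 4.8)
The plan is to handle the two implications separately: the ``only if'' direction is immediate from the definition of a joint distribution, while the ``if'' direction I would prove by induction on $n$, using Lemma~\ref{lem:chain-joint-of-three} as the gluing tool.

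For ``only if'': if $A_{1},\dots,A_{n}$ are jointly distributed with the prescribed expectations, then each pair $(A_{i},A_{i+1})$ is in particular a jointly distributed $\pm1$ pair with expectations $\left\langle A_{i}\right\rangle$, $\left\langle A_{i+1}\right\rangle$, $\left\langle A_{i}A_{i+1}\right\rangle$, so Lemma~\ref{lem:joint-of-two} (with $A=A_{i}$, $B=A_{i+1}$) gives exactly the stated inequalities. Nothing further is needed.

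For ``if'': assume the condition of Lemma~\ref{lem:joint-of-two} holds for $A=A_{i}$, $B=A_{i+1}$ for every $i=1,\dots,n-1$. The base case $n=2$ is Lemma~\ref{lem:joint-of-two} itself. For $n\ge3$, I would assume the claim for chains of length $n-1$. Since the hypothesis in particular covers $i=1,\dots,n-2$, the induction hypothesis supplies a joint distribution of $A_{1},\dots,A_{n-1}$ realizing $\left\langle A_{1}\right\rangle,\dots,\left\langle A_{n-1}\right\rangle$ and $\left\langle A_{1}A_{2}\right\rangle,\dots,\left\langle A_{n-2}A_{n-1}\right\rangle$. Separately, the case $i=n-1$ of the hypothesis together with Lemma~\ref{lem:joint-of-two} supplies a jointly distributed pair $(A_{n-1}',A_{n})$ with $\left\langle A_{n-1}'\right\rangle=\left\langle A_{n-1}\right\rangle$, the prescribed $\left\langle A_{n}\right\rangle$, and $\left\langle A_{n-1}'A_{n}\right\rangle$ equal to the prescribed $\left\langle A_{n-1}A_{n}\right\rangle$. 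The variables $A_{n-1}$ and $A_{n-1}'$ are $\pm1$-valued with the same expectation, hence identically distributed, so Lemma~\ref{lem:chain-joint-of-three}, applied with $A=(A_{1},\dots,A_{n-2})$ (a vector random variable), $B=A_{n-1}$, $B'=A_{n-1}'$, $C=A_{n}$, yields a coupling $(A_{1},\dots,A_{n-2},A_{n-1},A_{n-1}',A_{n})$ in which $A_{n-1}=A_{n-1}'$. Identifying the two equal coordinates gives a coupling $(A_{1},\dots,A_{n})$ whose $(A_{1},\dots,A_{n-1})$-marginal is the one from the induction hypothesis and whose $(A_{n-1},A_{n})$-marginal is $(A_{n-1}',A_{n})$; hence all prescribed $1$- and consecutive $2$-marginal expectations are realized, closing the induction.

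The step I expect to need the most care is this gluing: one must be comfortable invoking Lemma~\ref{lem:chain-joint-of-three} with its first argument $A$ being a \emph{vector} rather than a scalar (legitimate, since that lemma is stated for jointly distributed random variables in the broad sense, including vectors), and one must check that identifying $A_{n-1}$ with $A_{n-1}'$ disturbs neither of the two marginals being stitched together. Since only consecutive pair expectations are constrained --- nothing like $\left\langle A_{n-2}A_{n}\right\rangle$ is specified --- there is no further compatibility obstruction, and the construction goes through for all $n\ge2$.
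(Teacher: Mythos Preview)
Your argument is correct. The paper itself does not supply a proof of this lemma: it is listed in Section~2 among the ``results we need for the proofs'' imported from Ref.~\cite{KDL2015}, so there is no in-paper proof to compare against. Your induction on $n$ with Lemma~\ref{lem:chain-joint-of-three} as the gluing step is exactly the standard route (and is in fact how the cited reference proceeds); the only point worth flagging, which you already flag, is that Lemma~\ref{lem:chain-joint-of-three} must be read with $A$ allowed to be vector-valued, which is consistent with how the paper uses ``random variable'' in the broad sense.
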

\begin{thm}
\label{thm:A1-An}Jointly distributed $\pm1$-valued random variables
$A_{1},\dots,A_{n}$ ($n\ge3$) with given expectations 
\[
\left\langle A_{1}\right\rangle ,\dots,\left\langle A_{n}\right\rangle ,\quad\left\langle A{}_{1}A{}_{2}\right\rangle ,\dots,\left\langle A{}_{n-1}A{}_{n}\right\rangle ,\left\langle A{}_{n}A{}_{1}\right\rangle 
\]
exist if and only if $A=A_{i}$ and $B=A_{i\oplus1}$ satisfy the
condition of Lemma \ref{lem:joint-of-two} for all $i=1,\dots,n$
and 
\[
\so\left(\left\langle A{}_{1}A{}_{2}\right\rangle ,\dots,\left\langle A{}_{n-1}A{}_{n}\right\rangle ,\left\langle A{}_{n}A{}_{1}\right\rangle \right)\le n-2.
\]

\end{thm}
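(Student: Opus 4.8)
The plan is to prove the two implications separately: necessity is short and combinatorial, sufficiency goes by induction on $n$, removing one variable at a time. For necessity, assume a joint distribution of $A_1,\dots,A_n$ exists; then each consecutive pair $(A_i,A_{i\oplus1})$ has a joint distribution, so Lemma~\ref{lem:joint-of-two} gives the pairwise conditions. For the $\so$-bound, fix signs $m_1,\dots,m_n\in\{-1,1\}$ with $\prod_i m_i=-1$ and look at the $\pm1$ variables $m_iA_iA_{i\oplus1}$. Since each $A_j$ occurs in exactly two edge products, $\prod_i(m_iA_iA_{i\oplus1})=\prod_i m_i=-1$, so an odd number of them (in particular at least one) equal $-1$, whence $\sum_i m_iA_iA_{i\oplus1}\le n-2$ pointwise; taking expectations and maximizing over admissible sign patterns gives $\so(\langle A_1A_2\rangle,\dots,\langle A_nA_1\rangle)\le n-2$.

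For the base case $n=3$ of sufficiency, I would build the joint of $(A_1,A_2,A_3)$ by giving $A_2$ its prescribed marginal and, conditionally on each $A_2=s\in\{-1,1\}$, coupling $A_1$ with $A_3$. The conditional expectations $\langle A_1\mid A_2=s\rangle$ and $\langle A_3\mid A_2=s\rangle$ are determined by the bunches $(A_1,A_2)$ and $(A_2,A_3)$, while by Lemma~\ref{lem:joint-of-two} the conditional correlation $\langle A_1A_3\mid A_2=s\rangle$ can be put anywhere in an explicit interval. Since $\langle A_3A_1\rangle$ is the $\Pr[A_2=\cdot]$-weighted average of these conditional correlations, a valid choice exists iff $\langle A_3A_1\rangle$ lies between the two averaged endpoints; using $|u+v|+|u-v|=2\max(|u|,|v|)$ the endpoints collapse to $\max(|\langle A_1\rangle+\langle A_3\rangle|,|\langle A_1A_2\rangle+\langle A_2A_3\rangle|)-1$ and $1-\max(|\langle A_1\rangle-\langle A_3\rangle|,|\langle A_1A_2\rangle-\langle A_2A_3\rangle|)$, so the required containment is exactly the pairwise condition for $(A_3,A_1)$ plus the four inequalities comprising $\so(\langle A_1A_2\rangle,\langle A_2A_3\rangle,\langle A_3A_1\rangle)\le1$, all given.

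For the inductive step ($n\ge4$) I would eliminate $A_n$ by choosing a real $c$ to serve as a virtual correlation for the non-adjacent pair $\{A_{n-1},A_1\}$ such that (i) $(A_{n-1},A_1)$ with correlation $c$ satisfies Lemma~\ref{lem:joint-of-two}; (ii) the triangle $(A_{n-1},A_n,A_1)$ with correlations $\langle A_{n-1}A_n\rangle,\langle A_nA_1\rangle,c$ satisfies the $n=3$ hypotheses; and (iii) the $(n-1)$-cycle on $A_1,\dots,A_{n-1}$ with consecutive correlations $\langle A_1A_2\rangle,\dots,\langle A_{n-2}A_{n-1}\rangle,c$ satisfies the hypotheses of the theorem for $n-1$ vertices. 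Granting such a $c$: the $n=3$ case yields a joint $\nu$ of $(A_{n-1},A_n,A_1)$, the induction hypothesis yields a joint $\mu$ of $(A_1,\dots,A_{n-1})$, the two agree on the marginal of $(A_{n-1},A_1)$ (determined by $\langle A_{n-1}\rangle,\langle A_1\rangle,c$), and Lemma~\ref{lem:chain-joint-of-three}, applied with the shared coordinate taken to be the vector $(A_{n-1},A_1)$, glues $\mu$ and $\nu$ into a joint distribution of $(A_1,\dots,A_n)$ with all prescribed expectations.

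The crux is the existence of an admissible $c$, i.e.\ that the three intervals from (i)--(iii) intersect, which reduces to nine inequalities of the form ``lower endpoint $\le$ upper endpoint.'' Applying Lemma~\ref{lem:s0_s1_split} to the hypothesis $\so(\langle A_1A_2\rangle,\dots,\langle A_nA_1\rangle)\le n-2$, splitting off the two edges at $A_n$, gives $\se(\langle A_1A_2\rangle,\dots,\langle A_{n-2}A_{n-1}\rangle)+|\langle A_{n-1}A_n\rangle-\langle A_nA_1\rangle|\le n-2$ and $\so(\langle A_1A_2\rangle,\dots,\langle A_{n-2}A_{n-1}\rangle)+|\langle A_{n-1}A_n\rangle+\langle A_nA_1\rangle|\le n-2$, which settle several of the nine; the elementary bound $\se(x_1,\dots,x_k)+\so(x_1,\dots,x_k)\le2(k-1)$ (the optimal sign vectors for $\se$ and $\so$ must differ in at least one coordinate) settles another; and the purely chain-theoretic inequalities $\so(\langle A_1A_2\rangle,\dots,\langle A_{n-2}A_{n-1}\rangle)+|\langle A_1\rangle+\langle A_{n-1}\rangle|\le n-2$ and its $\se$/$|\langle A_1\rangle-\langle A_{n-1}\rangle|$ analogue I would extract as an auxiliary lemma: for any chain of valid pairs $A_1-\dots-A_m$, $\so(\langle A_1A_2\rangle,\dots,\langle A_{m-1}A_m\rangle)+|\langle A_1\rangle+\langle A_m\rangle|\le m-1$ and $\se(\cdots)+|\langle A_1\rangle-\langle A_m\rangle|\le m-1$, proved by a short interlocking induction on $m$ (base case $m=2$ is Lemma~\ref{lem:joint-of-two}; the step peels off the last edge with Lemma~\ref{lem:s0_s1_split} and uses the triangle inequality). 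The remaining inequalities among the nine follow from $|\langle A_i\rangle|\le1$ and the pairwise bounds by the triangle inequality. I expect the main obstacle to be exactly this verification that a valid $c$ exists — keeping the nine inequalities and the two halves of the chain-bound induction organized — but once Lemma~\ref{lem:s0_s1_split} is in hand none of it is deep.
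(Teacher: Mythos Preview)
The paper does not itself prove Theorem~\ref{thm:A1-An}: Section~2 lists it among ``results from Ref.~\cite{KDL2015}'' and states it without argument, so there is no in-paper proof to compare against. Your outline, however, is correct and would constitute a complete proof. The necessity argument via the pointwise identity $\prod_i m_iA_iA_{i\oplus1}=\prod_i m_i=-1$ is the standard one. The $n=3$ base case by conditioning on $A_2$ works exactly as you say: the $\Pr[A_2=\pm1]$-weighted Lemma~\ref{lem:joint-of-two} bounds on the conditional correlation $\langle A_1A_3\mid A_2=s\rangle$ collapse, via $|u+v|+|u-v|=2\max(|u|,|v|)$, to $\max(|\langle A_1\rangle{+}\langle A_3\rangle|,|\langle A_1A_2\rangle{+}\langle A_2A_3\rangle|)-1\le\langle A_3A_1\rangle\le 1-\max(|\langle A_1\rangle{-}\langle A_3\rangle|,|\langle A_1A_2\rangle{-}\langle A_2A_3\rangle|)$, which is precisely the pairwise condition on $(A_3,A_1)$ together with $\so\le1$. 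For the induction, the three intervals for the virtual $c$ do intersect: the two cross-inequalities between the triangle interval and the $(n{-}1)$-cycle interval are exactly the two branches of Lemma~\ref{lem:s0_s1_split} applied to the hypothesis $\so(\langle A_1A_2\rangle,\dots,\langle A_nA_1\rangle)\le n-2$; the self-consistency of the $(n{-}1)$-cycle interval is Lemma~\ref{lem:s0_plus_s1}; the four inequalities mixing the marginal interval $[\,|\langle A_{n-1}\rangle{+}\langle A_1\rangle|-1,\;1-|\langle A_{n-1}\rangle{-}\langle A_1\rangle|\,]$ with the other two follow, on the $(n{-}1)$-cycle side, from your chain lemma (whose interlocked induction does go through, peeling off the last edge with Lemma~\ref{lem:s0_s1_split} and the triangle inequality in the form $|a_1\pm a_m|\le|a_1\mp a_{m-1}|+|a_{m-1}\pm a_m|$) and, on the triangle side, from Lemma~\ref{lem:joint-of-two} applied to $(A_{n-1},A_n)$ and $(A_n,A_1)$ plus the triangle inequality. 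The gluing along the common $(A_{n-1},A_1)$-marginal via Lemma~\ref{lem:chain-joint-of-three} is legitimate. One cosmetic point: $\so,\se$ are defined in the paper only for $\ge2$ arguments, so either record the natural one-argument extension $\se(x)=x$, $\so(x)=-x$ before invoking the chain lemma at $m=2$, or start that auxiliary induction at $m=3$.
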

The main result of Ref. \cite{KDL2015} is the following theorem.
\begin{thm}
\label{thm:main-criterion}For each $i=1,\dots,n$ ($n\ge2$), let
the distribution of a pair $\left(V_{i},W_{i\oplus1}\right)$ of $\pm1$-valued
random variables be given. Then, the following two statements are
equivalent:
\begin{enumerate}
\item there exists a joint distribution (coupling) of the pairs such that
for all $i=1,\dots,n$ the probability
\[
\Pr\left[V_{i}\ne W_{i}\right]
\]
in the joint is the minimum possible allowed by the marginal distributions
of $V_{i}$ and $W_{i}$ (i.e., there exists a maximally noncontextual
description),
\item the main criterion
\[
\so\left(\left\langle V_{i}W_{i\oplus1}\right\rangle ,1-\left|\left\langle V_{i}\right\rangle -\left\langle W_{i}\right\rangle \right|:i=1,\dots,n\right)\le2n-2
\]
holds true.
\end{enumerate}
\end{thm}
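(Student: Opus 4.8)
\emph{Proof proposal.} The plan is to show that statement (1) is just a disguised instance of Theorem~\ref{thm:A1-An} for a cyclic system of $2n$ $\pm1$-valued random variables, so that the criterion in (2) falls out of the $\so$-condition there.

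First I would reformulate (1). For $\pm1$-valued $V_i,W_i$ with fixed expectations $\langle V_i\rangle,\langle W_i\rangle$, one has $\Pr[V_i\ne W_i]=\tfrac12\bigl(1-\langle V_iW_i\rangle\bigr)$, so this probability is minimized precisely when $\langle V_iW_i\rangle=1-|\langle V_i\rangle-\langle W_i\rangle|$; and since the joint law of two $\pm1$ variables is determined by its three expectations, the minimizing (maximal) coupling is \emph{unique}. Consequently, a coupling of the bunches realizing the minimum $\Pr[V_i\ne W_i]$ for every $i$ is the same thing as a joint distribution of the $2n$ variables arranged cyclically as $W_1,V_1,W_2,V_2,\dots,W_n,V_n$ whose consecutive two-marginals are: the given bunch law on each edge $\{V_i,W_{i\oplus1}\}$, and the maximal-coupling law on each edge $\{W_i,V_i\}$. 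The single-variable marginals prescribed along this cycle are automatically consistent, since each $\langle V_i\rangle$ and each $\langle W_i\rangle$ is fixed by the one bunch containing that variable.

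Next I would invoke Theorem~\ref{thm:A1-An} on this $2n$-cycle -- legitimate since $2n\ge4\ge3$, which also disposes of the case $n=2$. It says the required joint distribution exists iff (a) each of the $2n$ consecutive pairs satisfies the feasibility condition of Lemma~\ref{lem:joint-of-two}, and (b) $\so$ of the $2n$ consecutive correlations is $\le 2n-2$. Condition (a) holds on the bunch edges because the bunches are given as honest $\pm1$ laws. On a connection edge $\{W_i,V_i\}$ the correlation equals $1-|\langle V_i\rangle-\langle W_i\rangle|$, so the upper bound in Lemma~\ref{lem:joint-of-two} is met with equality, while its lower bound rearranges to $|\langle V_i\rangle+\langle W_i\rangle|+|\langle V_i\rangle-\langle W_i\rangle|\le2$, i.e.\ $2\max(|\langle V_i\rangle|,|\langle W_i\rangle|)\le2$, which always holds. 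Hence (a) is automatic and (1) is equivalent to (b). Finally, the $2n$ consecutive correlations are the $n$ numbers $\langle V_iW_{i\oplus1}\rangle$ together with the $n$ numbers $1-|\langle V_i\rangle-\langle W_i\rangle|$; since $\so$ is symmetric in its arguments, $\so$ of this multiset equals $\so(\langle V_iW_{i\oplus1}\rangle,\,1-|\langle V_i\rangle-\langle W_i\rangle|:i=1,\dots,n)$, and $2n-2$ is exactly the bound in (2), which gives the equivalence.

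I do not expect a genuine obstacle: once Theorem~\ref{thm:A1-An} is available, the statement is essentially a corollary, and the remaining work is the combinatorial repackaging of the connections and bunches into one $2n$-cycle together with the elementary check that connection edges never violate two-variable feasibility. The single point deserving care is the reduction step itself -- that ``$\Pr[V_i\ne W_i]$ minimal for every $i$'' coincides with ``every connection two-marginal equals the maximal coupling'' -- which rests on the uniqueness of the maximal coupling of two $\pm1$ variables; without that uniqueness one could not pass to a cyclic system with fully prescribed edge-distributions and apply Theorem~\ref{thm:A1-An} directly.
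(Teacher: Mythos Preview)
Your reduction is correct, and there is little to compare it against: in this paper Theorem~\ref{thm:main-criterion} is not proved but quoted from Ref.~\cite{KDL2015} as one of the ``results we need for the proofs.'' Your argument is precisely the natural one once Theorem~\ref{thm:A1-An} is in hand: arrange the $2n$ variables around the cycle $W_1,V_1,W_2,V_2,\dots,W_n,V_n$, observe that for $\pm1$ variables a two-dimensional marginal is determined by its three expectations (so ``minimal $\Pr[V_i\ne W_i]$'' pins down the connection edge uniquely to $\langle V_iW_i\rangle=1-|\langle V_i\rangle-\langle W_i\rangle|$), check that the Lemma~\ref{lem:joint-of-two} bounds on each edge are automatic, and read off the $\so$-condition for the $2n$-cycle as statement~(2). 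The one point you flag as ``deserving care'' --- that minimizing each $\Pr[V_i\ne W_i]$ forces the full two-marginal of $(V_i,W_i)$ --- is indeed the crux and is handled correctly by the observation that a $\pm1$ pair law is determined by its expectations. So your proposal is a clean corollary-style derivation, consistent with how the paper treats this theorem (as an imported black box sitting just above Theorem~\ref{thm:A1-An}).
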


\section{\label{sec:The-equivalence-result}The equivalence result}

In Ref. \cite{DKL2015}, the inequality 
\[
\so\left(\left\langle V_{i}W_{i\oplus1}\right\rangle :i=1,\dots,n\right)\le n-2+\sum_{i=1}^{n}\left|\left\langle V_{i}\right\rangle -\left\langle W_{i}\right\rangle \right|
\]
was derived by computer-assisted calculations as a criterion for the
existence of a maximally noncontextual description for $n=3$, $n=4$,
and $n=5$, and it was conjectured that the same pattern would hold
for all $n\ge2$. In this section, we prove (in Theorem~\ref{thm:equivalence}
below) this conjecture by showing that the inequality shown above
is equivalent to the main criterion of Theorem~\ref{thm:main-criterion}.
\begin{lem}
\label{lem:interesting-trivial}For any numbers $a_{1},\dots,a_{n}\in\mathbb{R}$
($n\ge2)$, exactly one of the following conditions hold:
\begin{enumerate}
\item for some index $k$, the inequality $a_{i}\ge|a_{k}|$ holds for all
$i\ne k$.
\item for some distinct indices $j$ and $k$, the inequality $a_{j}+a_{k}<0$
holds.
\end{enumerate}
\end{lem}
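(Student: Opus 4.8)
The plan is to verify the two halves of the ``exactly one'' claim separately: (a) conditions (1) and (2) are mutually exclusive, and (b) at least one of them always holds.

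For part (a), I would assume condition (1) holds with some witness index $k$, so that $a_i \ge |a_k|$ whenever $i \ne k$, and then check directly that $a_j + a_{j'} \ge 0$ for every pair of distinct indices $j \ne j'$, splitting on whether one of $j, j'$ equals $k$. When neither does, both summands are $\ge |a_k| \ge 0$; when one of them equals $k$, say $j' = k$, the other summand satisfies $a_j \ge |a_k| = |a_{j'}| \ge -a_{j'}$. Either way the sum is nonnegative, so condition (2) fails; hence (1) and (2) cannot both hold.

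For part (b), I would assume condition (2) fails, i.e. $a_j + a_k \ge 0$ for all distinct $j, k$, and exhibit a witness for condition (1): take $k$ to be an index at which $a_k$ is minimal. For each $i \ne k$, minimality gives $a_i \ge a_k$, so it remains only to show $a_i \ge -a_k$. This is immediate when $a_k \ge 0$ (since then $a_i \ge a_k \ge 0 \ge -a_k$), and when $a_k < 0$ it is exactly the negation of (2) applied to the distinct pair $\{i,k\}$, yielding $a_i \ge -a_k = |a_k|$. In both cases $a_i \ge |a_k|$ for all $i \ne k$, so (1) holds. The hypothesis $n \ge 2$ is precisely what makes ``indices other than $k$'' and ``distinct pairs $j,k$'' nonvacuous.

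I do not expect a genuine obstacle here; the lemma is essentially a bookkeeping fact. The only decision that matters is choosing the witness index in part (b) to be an argmin of the $a_i$, after which the short case distinction on the sign of $a_k$ closes the argument.
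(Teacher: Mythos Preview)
Your proof is correct and complete. The paper itself omits the proof of this lemma (it is stated without proof, presumably as an elementary observation), so there is no alternative argument to compare against; your choice of the argmin index as the witness for condition~(1) and the short sign-based case split is exactly the natural way to handle it.
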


\begin{lem}
\label{lem:expand_s0_s1}Suppose that for some index $k$ the inequalities
$a_{i}\ge a_{k}$ and $a_{i}\ge0$ hold for all $i\ne k$ (this is
implied in particular by condition 1 of Lemma~\ref{lem:interesting-trivial}).
Then,
\[
\so\left(a_{1},\dots,a_{n}\right)=\sum_{i\ne k}a_{i}-a_{k}.
\]
If condition 1 of Lemma \ref{lem:interesting-trivial} holds, then
\[
\se\left(a_{1},\dots,a_{n}\right)=\sum_{i}a_{i}.
\]

\end{lem}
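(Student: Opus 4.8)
The plan is to work directly from the definitions of $\so$ and $\se$ in~\eqref{eq:s1} and~\eqref{eq:s0}, exhibiting explicit sign vectors $(m_1,\dots,m_n)$ that attain the claimed values and then arguing optimality. For the $\so$ formula, consider the sign assignment $m_i=+1$ for all $i\ne k$ and $m_k=-1$; this has product $-1$, so it is admissible for $\so$, and it yields $\sum_{i\ne k}a_i-a_k$, establishing the lower bound $\so(a_1,\dots,a_n)\ge\sum_{i\ne k}a_i-a_k$. For the reverse inequality, take any admissible $(m_i)$ with $\prod m_i=-1$, so an odd number of the $m_i$ equal $-1$; in particular at least one does. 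I would compare $\sum m_i a_i$ against $\sum_{i\ne k}a_i-a_k$ by a swapping/sign-flip argument: using $a_i\ge0$ for $i\ne k$, flipping any $m_i$ from $-1$ to $+1$ (for $i\ne k$) does not decrease the sum, so we may push all the ``$-1$'' mass onto index $k$; and since the product must stay $-1$, after such reductions the optimal configuration is exactly $m_k=-1$, $m_i=+1$ otherwise — here the hypothesis $a_i\ge a_k$ (equivalently $a_i\ge |a_k|$ when $a_k$ could be negative, but only $a_i\ge a_k$ is needed for this direction together with $a_i \ge 0$) ensures that keeping a second index negative instead of $k$ cannot help. This is the step requiring the most care: one must check the parity bookkeeping so that each elementary flip preserves admissibility (flips come in pairs, or one rebalances with index $k$), and confirm that the inequalities $a_i\ge a_k$ and $a_i\ge 0$ are exactly what make each reduction monotone.

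For the $\se$ formula, assume condition~1 of Lemma~\ref{lem:interesting-trivial}, i.e.\ $a_i\ge|a_k|$ for all $i\ne k$. The all-ones vector $m_i=+1$ is admissible for $\se$ (product $+1$) and gives $\sum_i a_i$, so $\se\ge\sum_i a_i$. For the upper bound, any admissible $(m_i)$ has an even number of $-1$'s; if this number is zero we get exactly $\sum_i a_i$, and otherwise there are at least two indices with $m_i=-1$. Using $a_i\ge 0$ for all $i\ne k$ and $a_i \ge |a_k| \ge a_k$, I would again flip negative signs to positive in pairs, each flip not decreasing the sum, until no negative signs remain; the bound $a_i\ge|a_k|$ is what handles the case where one of the two flipped indices is $k$ itself (flipping $m_k$ and some $m_j$ simultaneously changes the sum by $-2(a_k+a_j)\le 0$ is \emph{wrong} in sign — rather, flipping both from $-1$ to $+1$ changes the sum by $+2(a_j+a_k)$, and $a_j\ge|a_k|$ gives $a_j+a_k\ge 0$). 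Thus $\se(a_1,\dots,a_n)\le\sum_i a_i$, completing the proof.

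I would present this compactly: state the admissible witness vector, note it attains the claimed value, then dispatch optimality with the monotone sign-flip argument, invoking the hypotheses precisely where the flips need $a_i\ge 0$ (flipping a single pair away from $k$) or $a_i\ge|a_k|$ (flipping a pair involving $k$). The main obstacle is purely organizational — making the pairwise-flip reduction airtight with respect to the parity constraint $\prod m_i=\pm1$ — rather than any deep inequality; once the flip lemma is phrased correctly, both identities follow by the same mechanism. A brief remark that the first $\so$ identity needs only $a_i\ge a_k$ and $a_i\ge 0$ (hence holds under the stated weaker hypothesis), while the $\se$ identity genuinely uses $a_i\ge|a_k|$, would make the asymmetry in the statement transparent.
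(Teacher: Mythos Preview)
The paper states this lemma without proof (it is treated as elementary, alongside Lemmas~\ref{lem:interesting-trivial} and~\ref{lem:WLOG}), so there is no paper argument to compare against. Your plan is correct and would furnish a complete proof: the witness sign vectors you exhibit are admissible and attain the claimed values, and the parity-preserving flip reduction works once you track the two kinds of moves (swap a $-1$ at some $j\ne k$ with a $+1$ at $k$, using $a_j\ge a_k$; or flip a pair of $-1$'s at $j,l\ne k$ to $+1$'s, using $a_j,a_l\ge 0$) for the $\so$ case, and analogous pair-flips for the $\se$ case with $a_j+a_k\ge 0$ guaranteed by $a_j\ge|a_k|$.

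If you want to streamline the write-up and avoid the flip bookkeeping altogether, you can observe that writing $N=\{i:m_i=-1\}$ gives $\sum_i m_i a_i=\sum_i a_i-2\sum_{i\in N}a_i$, so the two optimality statements reduce to showing $\sum_{i\in N}a_i\ge a_k$ whenever $|N|$ is odd, and $\sum_{i\in N}a_i\ge 0$ whenever $|N|$ is even. Both are immediate from the hypotheses: in the odd case, if $k\in N$ the remaining terms are nonnegative, and if $k\notin N$ pick any $j\in N$ and use $a_j\ge a_k$ plus nonnegativity of the rest; in the even case with $|N|\ge 2$, if $k\in N$ pair it with some $j\in N$ and use $a_j\ge|a_k|$, and otherwise all terms are nonnegative. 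This is equivalent to your argument but sidesteps the iterative reduction.
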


\begin{lem}
\label{lem:WLOG}For each $i=1,\dots,n$ ($n\ge2$), let $\left(V_{i},W_{i\oplus1}\right)$
be a pair of jointly distributed random variables. For any combination
of signs $m_{1},\dots,m_{n}\in\{-1,+1\}$, there exist another set
of random variables
\begin{eqnarray*}
\hat{V}_{i}: & = & m_{i}V_{i},\\
\hat{W}_{i}: & = & m_{i}W_{i}
\end{eqnarray*}
(with each pair $\big(\hat{V}_{i},\hat{W}_{i\oplus1}\big)$ jointly
distributed) preserving the values of $\se$ and $\so$ of the product
expectations:
\begin{align*}
\se\left(\left\langle V_{i}W_{i\oplus1}\right\rangle :i=1,\dots,n\right) & =\se\big(\langle\hat{V}_{i}\hat{W}_{i\oplus1}\rangle:i=1,\dots,n\big),\\
\so\left(\left\langle V_{i}W_{i\oplus1}\right\rangle :i=1,\dots,n\right) & =\so\big(\langle\hat{V}_{i}\hat{W}_{i\oplus1}\rangle:i=1,\dots,n\big),
\end{align*}
and preserving the upper and lower bounds 
\begin{align*}
1-\big|\langle\hat{V}_{i}\rangle-\langle\hat{W}_{i}\rangle\big| & =1-\left|\left\langle V_{i}\right\rangle -\left\langle W_{i}\right\rangle \right|,\\
\big|\langle\hat{V}_{i}\rangle+\langle\hat{W}_{i}\rangle\big|-1 & =\left|\left\langle V_{i}\right\rangle +\left\langle W_{i}\right\rangle \right|-1
\end{align*}
of the hypothetical connection expectation $\left\langle V_{i}W_{i}\right\rangle $
for all $i=1,\dots,n$. This implies in particular that the conditions
(\ref{eq:master}) and (\ref{eq:simpleform}) as well as the measure
(\ref{eq:measure}) to be defined later are all insensitive to negations
of the physical properties.
\end{lem}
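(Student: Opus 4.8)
The plan is to prove the lemma by a direct computation whose only non-routine ingredient is a parity observation about the cyclic structure; none of the earlier lemmas are needed.

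First I would dispose of the easy parts. The existence of the jointly distributed pairs is automatic: $\hat{V}_i=m_iV_i$ and $\hat{W}_{i\oplus1}=m_{i\oplus1}W_{i\oplus1}$ are deterministic functions of the jointly distributed pair $(V_i,W_{i\oplus1})$, so $(\hat{V}_i,\hat{W}_{i\oplus1})$ carries the obvious pushforward joint distribution. The two bound identities are equally immediate: since $\langle\hat{V}_i\rangle=m_i\langle V_i\rangle$ and $\langle\hat{W}_i\rangle=m_i\langle W_i\rangle$ with $|m_i|=1$, we get $\big|\langle\hat{V}_i\rangle-\langle\hat{W}_i\rangle\big|=\big|\langle V_i\rangle-\langle W_i\rangle\big|$ and $\big|\langle\hat{V}_i\rangle+\langle\hat{W}_i\rangle\big|=\big|\langle V_i\rangle+\langle W_i\rangle\big|$, which yields the claimed equalities for the upper bound $1-\big|\langle V_i\rangle-\langle W_i\rangle\big|$ and the lower bound $\big|\langle V_i\rangle+\langle W_i\rangle\big|-1$ of the hypothetical connection expectation $\langle V_iW_i\rangle$.

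The substantive step is the invariance of $\se$ and $\so$ of the product expectations. Writing $c_i:=\left\langle V_iW_{i\oplus1}\right\rangle$, we have $\langle\hat{V}_i\hat{W}_{i\oplus1}\rangle=m_im_{i\oplus1}c_i$. The key structural fact I would isolate is that $\prod_{i=1}^n m_im_{i\oplus1}=\big(\prod_{i=1}^n m_i\big)^2=1$, because in the cyclic product each sign $m_j$ occurs exactly once with a $V$-index and exactly once with a $W$-index. Then, unfolding the definition of $\so$ and substituting $\delta_i:=\epsilon_im_im_{i\oplus1}$ for the maximization variable $\epsilon$, one obtains a self-bijection of $\{-1,1\}^n$ that preserves the sign of the product $\prod_i(\cdot)$ — since $\prod_i\delta_i=\big(\prod_i\epsilon_i\big)\big(\prod_i m_im_{i\oplus1}\big)=\prod_i\epsilon_i$ — and that turns $\sum_i\epsilon_i\langle\hat{V}_i\hat{W}_{i\oplus1}\rangle$ into $\sum_i\delta_ic_i$. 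Hence the maximum defining $\so\big(\langle\hat{V}_i\hat{W}_{i\oplus1}\rangle:i=1,\dots,n\big)$ equals the maximum defining $\so\big(c_i:i=1,\dots,n\big)$, and the identical argument with the product constraint $+1$ in place of $-1$ gives the statement for $\se$.

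I do not expect any genuine obstacle here: the lemma is essentially a bookkeeping statement, and the one point to get right is the parity count $\prod_i m_im_{i\oplus1}=1$, which is exactly what makes the $\se$/$\so$ maximizations transform into themselves rather than into one another. Finally, the concluding \emph{insensitivity} assertion needs no separate argument: the conditions and the contextuality measure introduced later in the paper are built solely from the four quantities $\se\big(\langle V_iW_{i\oplus1}\rangle:i=1,\dots,n\big)$, $\so\big(\langle V_iW_{i\oplus1}\rangle:i=1,\dots,n\big)$, $1-\big|\langle V_i\rangle-\langle W_i\rangle\big|$, and $\big|\langle V_i\rangle+\langle W_i\rangle\big|-1$, each just shown to be invariant under the simultaneous sign flips $V_i,W_i\mapsto m_iV_i,m_iW_i$.
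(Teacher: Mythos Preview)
The paper states this lemma without proof, so there is no argument to compare against; your direct computation is exactly the kind of verification the authors leave to the reader, and it is correct. The parity observation $\prod_{i=1}^n m_im_{i\oplus1}=\big(\prod_i m_i\big)^2=1$ together with the sign-substitution bijection is precisely the point.

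One small slip in your closing sentence: condition~(\ref{eq:master}) is \emph{not} literally a function of the four scalars you list, since it applies $\so$ to the combined $2n$-tuple $\big(\langle V_iW_{i\oplus1}\rangle,\,1-|\langle V_i\rangle-\langle W_i\rangle|:i=1,\dots,n\big)$. Its invariance follows either by re-running your bijection argument on the $2n$ arguments (the sign flips are $m_im_{i\oplus1}$ on the first block and $+1$ on the second, with total product still $1$), or by invoking Lemma~\ref{lem:s0_s1_split} to split the $2n$-argument $\so$ into $\se/\so$ of the two blocks separately. Either fix is immediate, but as written your ``built solely from the four quantities'' claim does not cover~(\ref{eq:master}).
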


\begin{lem}
\label{lem:WLOG-permutation}One can always find such a configuration
of signs $m_{1},\dots,m_{n}\in\{-1,+1\}$ in Lemma~\ref{lem:WLOG}
that $\langle\hat{V}_{i}\hat{W}_{i\oplus1}\rangle$, $i=1,\dots,n$,
satisfy condition 1 of Lemma~\ref{lem:interesting-trivial} and since
the conditions (\ref{eq:master}) and (\ref{eq:simpleform}) as well
as the measure (\ref{eq:measure}) are all symmetrical w.r.t.\ permutations
of the indices, one can generally assume that condition 1 of Lemma~\ref{lem:interesting-trivial}
is satisfied with $k=n$.\end{lem}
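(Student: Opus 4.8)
The plan is to use the sign‑flip freedom of Lemma~\ref{lem:WLOG} to replace the bunch product expectations $a_i:=\langle V_iW_{i\oplus1}\rangle$ by $\epsilon_ia_i$ for a suitable sign pattern of \emph{even parity}, and then to show that the resulting tuple fails condition~2 of Lemma~\ref{lem:interesting-trivial}, hence (by that lemma) satisfies condition~1; after that, a relabeling moves the distinguished index to $n$.

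First I would record the transformation law: with $\hat V_i=m_iV_i$ and $\hat W_i=m_iW_i$ one has $\langle\hat V_i\hat W_{i\oplus1}\rangle=m_im_{i\oplus1}a_i$, so the effect of Lemma~\ref{lem:WLOG} on the product expectations is to multiply $a_i$ by $\epsilon_i:=m_im_{i\oplus1}$. The key observation is that, as $(m_1,\dots,m_n)$ ranges over $\{-1,1\}^n$, the vector $(\epsilon_1,\dots,\epsilon_n)$ ranges over \emph{exactly} the sign vectors with $\prod_{i=1}^n\epsilon_i=1$: on the one hand $\prod_i m_im_{i\oplus1}=\big(\prod_i m_i\big)^2=1$; on the other hand, given any such $(\epsilon_i)$, the recursion $m_1:=1$, $m_{i\oplus1}:=\epsilon_im_i$ closes consistently around the cycle precisely when $\prod_i\epsilon_i=1$. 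Thus the task reduces to finding $(\epsilon_i)$ with $\prod_i\epsilon_i=1$ for which $b_i:=\epsilon_ia_i$ satisfies condition~1 of Lemma~\ref{lem:interesting-trivial}.

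Next I would construct such a pattern. Let $N:=\{i:a_i<0\}$. If $|N|$ is even, put $\epsilon_i:=\operatorname{sign}(a_i)$ (with $\operatorname{sign}(0):=1$), so that every $b_i=|a_i|\ge0$ and $\prod_i\epsilon_i=(-1)^{|N|}=1$; then trivially $b_j+b_\ell\ge0$ for all distinct $j,\ell$. If $|N|$ is odd, pick an index $k$ with $|a_k|=\min_{1\le i\le n}|a_i|$, set $\epsilon_i:=\operatorname{sign}(a_i)$ for $i\ne k$, and choose $\epsilon_k\in\{-1,1\}$ so as to make $\prod_i\epsilon_i=1$; then $b_i=|a_i|$ for $i\ne k$ while $|b_k|=|a_k|=\min_i|a_i|$, so again $b_j+b_\ell\ge0$ for all distinct $j,\ell$, the only nontrivial case being when one of them equals $k$, where $b_j+b_k\ge|a_j|-|a_k|\ge0$ by minimality of $|a_k|$. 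In both cases condition~2 of Lemma~\ref{lem:interesting-trivial} fails for $(b_1,\dots,b_n)$, so by that lemma condition~1 holds (concretely with $k$ the chosen exceptional index, or with any global minimizer of $|a_i|$ in the even case).

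Finally, since a cyclic rotation of the labels $1,\dots,n$ maps the family of bunches and the family of connections onto themselves, and the conditions~(\ref{eq:master}) and~(\ref{eq:simpleform}) as well as the measure~(\ref{eq:measure}) are invariant under relabelings of the properties, one may rotate so that the index $k$ produced above becomes $n$; this yields the stated normalization. I expect the only genuine work to be the bookkeeping in the third step: verifying that precisely the even‑parity sign patterns are attainable, and that the chosen one both has $\prod_i\epsilon_i=1$ and still makes every pairwise sum $b_j+b_\ell$ nonnegative. Everything else is immediate from Lemmas~\ref{lem:WLOG} and~\ref{lem:interesting-trivial}.
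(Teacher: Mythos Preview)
Your proof is correct and takes essentially the same approach as the paper: both pick an index $k$ minimizing $|\langle V_iW_{i\oplus1}\rangle|$ and choose the signs so that every other transformed product expectation becomes $|\langle V_iW_{i\oplus1}\rangle|\ge|\langle V_kW_{k\oplus1}\rangle|$, which is exactly condition~1 of Lemma~\ref{lem:interesting-trivial}. The paper folds this into a single recursion $m_{k\oplus1}:=1$, $m_{i\oplus1}:=\operatorname{sign}(m_i\langle V_iW_{i\oplus1}\rangle)$ around the cycle (without explicitly isolating the even-parity constraint on the $\epsilon_i$ or splitting into your even/odd cases), but the resulting sign pattern and the verification are the same.
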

\begin{proof}
Let $k$ be the index that minimizes $\left|\left\langle V_{i}W_{i\oplus1}\right\rangle \right|$
and let $m_{k\oplus1}:=1$. Then, define recursively
\[
m_{i\oplus1}:=\begin{cases}
+1, & m_{i}\left\langle V_{i}W_{i\oplus1}\right\rangle \ge0,\\
-1, & \text{otherwise},
\end{cases}
\]
for all $i=k\oplus1,k\oplus2,\dots,k\oplus(n-1)$. This implies $\langle\hat{V}_{i}\hat{W}_{i\oplus1}\rangle=m_{i}m_{i\oplus1}\left\langle V_{i}W_{i\oplus1}\right\rangle \ge0$
for all $i\ne k$ and so $\langle\hat{V}_{i}\hat{W}_{i\oplus1}\rangle=\left|\left\langle V_{i}W_{i\oplus1}\right\rangle \right|\ge\left|\left\langle V_{k}W_{k\oplus k}\right\rangle \right|$
for all $i\ne k$ and condition 1 of Lemma~\ref{lem:interesting-trivial}
is satisfied.\end{proof}
\begin{lem}
\label{lem:inequalities}For any $a,b,c,d\in\mathbb{R}$, we have
\begin{align*}
-\left|d+c\right|+\left|a-c\right|+\left|d-b\right|-\left|a-b\right| & \le2\max\left\{ \left|b\right|,\left|d\right|\right\} ,\\
-\left|a-b\right|-\left|d-c\right|+\left|\left|a-c\right|-\left|d-b\right|\right| & \le0.
\end{align*}
\end{lem}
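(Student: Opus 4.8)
The plan is to obtain both inequalities from nothing more than the triangle inequality, the reverse triangle inequality, and the elementary identity $|x+y|+|x-y|=2\max\{|x|,|y|\}$; in particular no case analysis on the signs of $a,b,c,d$ should be needed.

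For the second inequality I would first rewrite it in the equivalent two-sided form $\bigl||a-c|-|d-b|\bigr|\le|a-b|+|d-c|$ and then bound each one-sided difference separately. Running the triangle inequality along the chain $a\to b\to d\to c$ gives $|a-c|\le|a-b|+|b-d|+|d-c|$, i.e.\ $|a-c|-|d-b|\le|a-b|+|d-c|$; running the analogous chain $d\to c\to a\to b$ gives $|d-b|-|a-c|\le|d-c|+|a-b|$. Combining the two bounds gives the claim.

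For the first inequality I would group the left-hand side as $\bigl(|a-c|-|a-b|\bigr)-|d+c|+|d-b|$. The reverse triangle inequality bounds the parenthesized term by $|b-c|$, so the left-hand side is at most $|b-c|-|d+c|+|d-b|$. The one clever step is to notice that $b-c=(b+d)-(c+d)$, so that the triangle inequality yields $|b-c|\le|b+d|+|c+d|$ and hence $|b-c|-|d+c|\le|b+d|$. Substituting, the left-hand side is at most $|b+d|+|d-b|$, which equals $2\max\{|b|,|d|\}$ by the identity above.

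I do not anticipate a genuine obstacle: both statements are elementary once one picks the right intermediate points for the triangle inequality. The only mildly non-obvious ingredient is the shift $b-c=(b+d)-(c+d)$ in the first inequality, which is exactly what makes the $-|d+c|$ term cancel against part of the triangle-inequality estimate for $|b-c|$; everything else is routine bookkeeping.
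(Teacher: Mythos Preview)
Your proposal is correct and follows essentially the same approach as the paper. For the second inequality your argument is identical to the paper's; for the first inequality the paper runs a single triangle inequality along the chain $a\to b\to -d\to c$ to obtain $|a-c|\le|a-b|+|b+d|+|d+c|$ directly, whereas you reach the same bound in two steps (reverse triangle $|a-c|-|a-b|\le|b-c|$, then $|b-c|\le|b+d|+|c+d|$)---the shift $b-c=(b+d)-(c+d)$ you single out is exactly the paper's detour through $-d$.
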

\begin{proof}
Using the triangle inequality $|a-c|\le|a-b|+|b-(-d)|+|(-d)-c|$,
we obtain
\begin{align*}
 & -|d+c|+|a-c|+|d-b|-|a-b|\\
 & \le-|d+c|+\left(|a-b|+|b-(-d)|+|(-d)-c|\right)+|d-b|-|a-b|\\
 & =|b+d|+|d-b|=2\max\left\{ |b|,|d|\right\} ,
\end{align*}
which is the first inequality. The latter inequality follows as the
conjunction of the following two triangle inequalities
\begin{align*}
\left|a-c\right| & \le\left|a-b\right|+\left|d-c\right|+\left|d-b\right|,\\
\left|d-b\right| & \le\left|a-b\right|+\left|d-c\right|+\left|a-c\right|.
\end{align*}
\end{proof}
\begin{lem}
\label{lem:trivialcase}For all $i=1,\dots,n$ ($n\ge2$), let $\left(V_{i},W_{i}\right)$
be a pair of jointly distributed $\pm1$-valued random variables satisfying
$\left\langle V_{i}W_{i}\right\rangle =1-\left|\left\langle V_{i}\right\rangle -\left\langle W_{i}\right\rangle \right|$
and let $\rho_{1},\dots,\rho_{n}$ satisfy

\begin{equation}
\left|\left\langle V_{i}\right\rangle +\left\langle W_{i\oplus1}\right\rangle \right|-1\le\rho_{i}\le1-\left|\left\langle V_{i}\right\rangle -\left\langle W_{i\oplus1}\right\rangle \right|\label{eq:rho_range}
\end{equation}
for $i=1,\dots,n$. If there exist distinct indices $j$ and $k$
such that
\begin{equation}
\big(1-\big|\langle V_{j}\rangle-\langle W_{j}\rangle\big|\big)+\big(1-\big|\langle V_{k}\rangle-\langle W_{k}\rangle\big|\big)\le0,\label{eq:trivialcasecondition}
\end{equation}
then, there exists a joint distribution of the pairs $\left(V_{i},W_{i}\right)$,
$i=1,\dots,n$, satisfying $\left\langle V_{i}W_{i\oplus1}\right\rangle =\rho_{i}$
for $i=1,\dots,n$.\end{lem}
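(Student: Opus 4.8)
The plan is to obtain the lemma from Theorem~\ref{thm:main-criterion}. Write $c_i:=\langle V_iW_i\rangle=1-|\langle V_i\rangle-\langle W_i\rangle|$; after a trivial rearrangement the hypothesis \eqref{eq:trivialcasecondition} reads $|\langle V_j\rangle-\langle W_j\rangle|+|\langle V_k\rangle-\langle W_k\rangle|\ge 2$. Since \eqref{eq:rho_range} holds, Lemma~\ref{lem:joint-of-two} provides for each $i$ a $\pm1$-valued bunch $(V_i,W_{i\oplus1})$ with expectations $\langle V_i\rangle,\langle W_{i\oplus1}\rangle,\rho_i$, and a joint distribution of $(V_1,W_1),\dots,(V_n,W_n)$ of the kind asserted by the lemma is precisely a maximally noncontextual description of the cyclic system formed from these bunches: in any such description the pair $(V_i,W_i)$, being $\pm1$-valued with the prescribed one-variable expectations and with $\Pr[V_i\neq W_i]$ minimal, necessarily has $\langle V_iW_i\rangle=c_i$, hence coincides with the given maximal coupling. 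So, by Theorem~\ref{thm:main-criterion}, the lemma is equivalent to the single inequality $\so(\rho_1,\dots,\rho_n,c_1,\dots,c_n)\le 2n-2$, which is what I would prove.

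First I would normalize, using the preceding lemmas. By Lemma~\ref{lem:WLOG} we may assume $\langle V_i\rangle\ge\langle W_i\rangle$ for all $i$, and by Lemma~\ref{lem:WLOG-permutation} together with the permutation-invariance of $\so$ we may place $j$ and $k$ conveniently. If $|\langle V_i\rangle-\langle W_i\rangle|=2$ for some $i$ --- in particular if this occurs at $i=j$ or $i=k$ --- then $(V_i,W_i)$ is deterministic, that connection imposes no constraint, and the cycle $V_1-W_2-V_2-\cdots-V_n-W_1-V_1$ reduces to a chain, so the required joint exists by the chain result Lemma~\ref{lem:chain-joint}, whose pairwise hypotheses are again supplied by \eqref{eq:rho_range}. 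Thus I may also assume $|\langle V_i\rangle-\langle W_i\rangle|<2$ for every $i$.

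For the inequality, expand $\so$ over the sign assignments $m=(m_e)_e\in\{-1,1\}^{2n}$ to the $2n$ edges $e$ of the cycle (bunch edge $(V_i,W_{i\oplus1})$ carrying the argument $\rho_i$, connection edge $(W_i,V_i)$ carrying $c_i$), subject to $\prod_e m_e=-1$. On a bunch edge, \eqref{eq:rho_range} gives $m_e\rho_i\le 1-|\langle V_i\rangle-m_e\langle W_{i\oplus1}\rangle|$; on a connection edge, $m_e c_i=1-|\langle V_i\rangle-\langle W_i\rangle|$ if $m_e=+1$, and $m_e c_i=|\langle V_i\rangle-\langle W_i\rangle|-1\le 1-|\langle V_i\rangle+\langle W_i\rangle|$ if $m_e=-1$. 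Summing the $2n$ resulting bounds, $\sum_e m_e\,r_e\le 2n-Q(m)$, where $r_e$ is the argument at edge $e$ and $Q(m)$ is a sum of $2n$ nonnegative terms, one per edge: each bunch edge contributes $|\langle V_i\rangle-m_e\langle W_{i\oplus1}\rangle|$, and each connection edge contributes $|\langle V_i\rangle-\langle W_i\rangle|$ or $2-|\langle V_i\rangle-\langle W_i\rangle|$ according to $m_e$. It therefore suffices to show $Q(m)\ge 2$ for every admissible $m$.

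This last step is the heart of the matter and, I expect, the main obstacle. Telescoping the consecutive differences once around the whole cycle --- bounding $2-|x-y|\ge|x+y|$ on the connection edges with $m_e=-1$ and using $\prod_e m_e=-1$ --- only gives $Q(m)\ge 2\max_i\{|\langle V_i\rangle|,|\langle W_i\rangle|\}$, which need not reach $2$. To recover the missing amount one cuts the cycle at the two connections $j$ and $k$, telescopes separately along the two resulting arcs, and combines these two bounds with the contributions of the two cut edges, estimating the latter by whichever of $2-|x-y|\ge|x+y|$ and $2-|x-y|\ge(1-|x|)+(1-|y|)$ is appropriate; bringing in the hypothesis $|\langle V_j\rangle-\langle W_j\rangle|+|\langle V_k\rangle-\langle W_k\rangle|\ge 2$ and the parity relation that $\prod_e m_e=-1$ forces among the signs occurring in $Q(m)$, one then checks $Q(m)\ge 2$ by a short case analysis on the signs $m_e$ at the two cut edges --- the case in which both are $+1$ is immediate, as it reduces directly to the hypothesis, while the mixed case and the case in which both are $-1$ require the hypothesis and the parity relation jointly. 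With $\langle V_j\rangle,\langle W_j\rangle,\langle V_k\rangle,\langle W_k\rangle$ playing the roles of $a,b,c,d$, the two inequalities of Lemma~\ref{lem:inequalities} are exactly the estimates needed to close these cases, while Lemmas~\ref{lem:interesting-trivial} and~\ref{lem:expand_s0_s1} serve to single out the worst sign assignment. (Alternatively one could avoid Theorem~\ref{thm:main-criterion} altogether, realize the two arcs as chains via Lemma~\ref{lem:chain-joint}, and glue them by solving a $4$-cycle realization problem in $\langle V_j\rangle,\langle W_j\rangle,\langle V_k\rangle,\langle W_k\rangle$ through Theorem~\ref{thm:A1-An}; the same inequalities resurface.)
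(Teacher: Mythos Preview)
Your reduction to Theorem~\ref{thm:main-criterion} is valid and gives a route genuinely different from the paper's. The paper proceeds by induction on $n$: the base case $n=2$ is a $4$-cycle handled through Theorem~\ref{thm:A1-An} and Lemma~\ref{lem:inequalities}; the inductive step (assuming $j=1<k<n$) realizes the arc $V_k,W_{k+1},\dots,V_n,W_1$ as a chain via Lemma~\ref{lem:chain-joint}, applies the induction hypothesis to the residual cycle of size $k$ (which still contains both distinguished indices $j,k$), and glues at the pair $(V_k,W_1)$ via Lemma~\ref{lem:chain-joint-of-three}. Your parenthetical alternative (two chains glued through a $4$-cycle) is close in spirit but not quite the same, and would need Lemma~\ref{lem:chain-joint-range} to control the endpoint product expectations, which are not freely choosable in the full Lemma~\ref{lem:joint-of-two} range.

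On your primary route, the inequality $Q(m)\ge 2$ is correct but considerably simpler than your outline suggests, and your bookkeeping of which cases need which ingredients is off. The clean dichotomy is this. If \emph{some} connection sign satisfies $m_l^c=-1$ (any $l$, not only $l\in\{j,k\}$), set aside its contribution $2-d_l$ and telescope the remaining $2n-1$ edges exactly as you describe: since $\prod_{e\neq l}m_e=(-1)/m_l^c=+1$, the cumulative sign returns to its starting value along the path from $V_l$ to $W_l$, so the telescoped sum is at least $|\langle V_l\rangle-\langle W_l\rangle|=d_l$, giving $Q(m)\ge(2-d_l)+d_l=2$ with \emph{no} appeal to the hypothesis~\eqref{eq:trivialcasecondition}. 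If instead \emph{every} $m_i^c=+1$, the connection contributions alone already give $Q(m)\ge\sum_i d_i\ge d_j+d_k\ge 2$ by the hypothesis. So Lemma~\ref{lem:inequalities} is not needed here at all, nor is a case split on the two cut edges, nor the preliminary reduction disposing of $|\langle V_i\rangle-\langle W_i\rangle|=2$; the hypothesis enters only in the all-positive case, the opposite of what you write. What your approach buys is a short direct argument that leans on the already-established Theorem~\ref{thm:main-criterion}; what the paper's buys is a self-contained construction from the more primitive Theorem~\ref{thm:A1-An}, with Lemma~\ref{lem:inequalities} doing real work only at the $n=2$ base.
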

\begin{proof}
We prove the statement by induction. For $n=2$, denoting $\left\langle V_{i}W_{i\oplus1}\right\rangle =\rho_{i}$
and $\left\langle V_{i}W_{i}\right\rangle =1-\left|\left\langle V_{i}\right\rangle -\left\langle W_{i}\right\rangle \right|$
for $i=1,2$, Theorem~\ref{thm:A1-An} implies that a joint exists
if

\[
\so\left(\left\langle V_{1}W_{2}\right\rangle ,\left\langle V_{2}W_{1}\right\rangle ,\left\langle V_{1}W_{1}\right\rangle ,\left\langle V_{2}W_{2}\right\rangle \right)\le2,
\]
where, without loss of generality (by Lemmas~\ref{lem:WLOG} and
\ref{lem:WLOG-permutation}), we assume $\left\langle V_{1}W_{2}\right\rangle \ge\left|\left\langle V_{2}W_{1}\right\rangle \right|$,
and condition (\ref{eq:trivialcasecondition}) yields $\left\langle V_{1}W_{1}\right\rangle +\left\langle V_{2}W_{2}\right\rangle \le0$.
Thus, we have by Lemma~\ref{lem:s0_s1_split}
\begin{align*}
 & \so\left(\left\langle V_{1}W_{2}\right\rangle ,\left\langle V_{2}W_{1}\right\rangle ,\left\langle V_{1}W_{1}\right\rangle ,\left\langle V_{2}W_{2}\right\rangle \right)\\
 & =\max\begin{cases}
\so\left(\left\langle V_{1}W_{2}\right\rangle ,\left\langle V_{2}W_{1}\right\rangle \right)+\se\left(\left\langle V_{1}W_{1}\right\rangle ,\left\langle V_{2}W_{2}\right\rangle \right)\\
\se\left(\left\langle V_{1}W_{2}\right\rangle ,\left\langle V_{2}W_{1}\right\rangle \right)+\so\left(\left\langle V_{1}W_{1}\right\rangle ,\left\langle V_{2}W_{2}\right\rangle \right)
\end{cases}\\
 & =\max\begin{cases}
\left\langle V_{1}W_{2}\right\rangle -\left\langle V_{2}W_{1}\right\rangle -\left\langle V_{1}W_{1}\right\rangle -\left\langle V_{2}W_{2}\right\rangle \\
\left\langle V_{1}W_{2}\right\rangle +\left\langle V_{2}W_{1}\right\rangle +\left|\left\langle V_{1}W_{1}\right\rangle -\left\langle V_{2}W_{2}\right\rangle \right|
\end{cases}\\
 & \le\max\begin{cases}
\left(1-\left|\left\langle V_{1}\right\rangle -\left\langle W_{2}\right\rangle \right|\right)-\left(\left|\left\langle V_{2}\right\rangle +\left\langle W_{1}\right\rangle \right|-1\right)\\
\qquad-\left(1-\left|\left\langle V_{1}\right\rangle -\left\langle W_{1}\right\rangle \right|\right)-\left(1-\left|\left\langle V_{2}\right\rangle -\left\langle W_{2}\right\rangle \right|\right)\\
\left(1-\left|\left\langle V_{1}\right\rangle -\left\langle W_{2}\right\rangle \right|\right)+\left(1-\left|\left\langle V_{2}\right\rangle -\left\langle W_{1}\right\rangle \right|\right)\\
\qquad+\left|\left(1-\left|\left\langle V_{1}\right\rangle -\left\langle W_{1}\right\rangle \right|\right)-\left(1-\left|\left\langle V_{2}\right\rangle -\left\langle W_{2}\right\rangle \right|\right)\right|
\end{cases}\\
 & =\max\begin{cases}
-\left|\left\langle V_{2}\right\rangle +\left\langle W_{1}\right\rangle \right|+\left|\left\langle V_{1}\right\rangle -\left\langle W_{1}\right\rangle \right|+\left|\left\langle V_{2}\right\rangle -\left\langle W_{2}\right\rangle \right|-\left|\left\langle V_{1}\right\rangle -\left\langle W_{2}\right\rangle \right|\\
2-\left|\left\langle V_{1}\right\rangle -\left\langle W_{2}\right\rangle \right|-\left|\left\langle V_{2}\right\rangle -\left\langle W_{1}\right\rangle \right|+\left|\left|\left\langle V_{1}\right\rangle -\left\langle W_{1}\right\rangle \right|-\left|\left\langle V_{2}\right\rangle -\left\langle W_{2}\right\rangle \right|\right|
\end{cases}\\
 & \le\max\begin{cases}
2\max\left\{ \left|\left\langle W_{2}\right\rangle \right|,\left|\left\langle V_{2}\right\rangle \right|\right\} \\
2
\end{cases}\le2,
\end{align*}
where the two inequalities follow from respectively Lemma~\ref{lem:joint-of-two}
and Lemma~\ref{lem:inequalities}. Thus, the statement holds for
$n=2.$

Assuming then that the statement holds for all systems smaller than
$n$ ($n\ge3$), we prove it for a system of size $n$. With no loss
of generality, assume $j=1<k<n$. By Lemma~\ref{lem:chain-joint},
there exists a joint for the chain $\left(V_{k},W_{k+1},\dots,V_{n},W_{1}\right)$
satisfying $\left\langle V_{i}W_{i\oplus1}\right\rangle =\rho_{i}$
for $i=k,\dots,n$ and this joint has a certain marginal of $\left(V_{k},W_{1}\right)$
satisfying by Lemma~\ref{lem:joint-of-two} the range 
\[
\left|\left\langle V_{k}\right\rangle +\left\langle W_{1}\right\rangle \right|-1\le\left\langle V_{k}W_{1}\right\rangle \le1-\left|\left\langle V_{k}\right\rangle -\left\langle W_{1}\right\rangle \right|.
\]
By the induction assumption, there exists a joint of $\left(V_{1},W_{2},\dots,V_{k},W_{1}\right)$
satisfying $\left\langle V_{i}W_{i\oplus1}\right\rangle =\rho_{i}$
for $i=1,\dots,k-1$ and matching the marginal of $\left(V_{k},W_{1}\right)$
mentioned above. By Lemma~\ref{lem:chain-joint-of-three}, it follows
that there exists a joint of all $\left(V_{i},W_{i}\right)$ satisfying
$\left\langle V_{i}W_{i\oplus1}\right\rangle =\rho_{i}$ for $i=1,\dots,n.$\end{proof}
\begin{thm}
\label{thm:equivalence}For each $i=1,\dots,n$ ($n\ge2$), let the
distribution of a pair $\left(V_{i},W_{i\oplus1}\right)$ of $\pm1$-valued
random variables be given. Then, the main criterion
\end{thm}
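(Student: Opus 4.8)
The plan is to prove that the two inequalities
\[
(\star)\colon\quad \so\left(\langle V_iW_{i\oplus1}\rangle,\,1-\big|\langle V_i\rangle-\langle W_i\rangle\big|:i=1,\dots,n\right)\le 2n-2
\]
and
\[
(\dagger)\colon\quad \so\left(\langle V_iW_{i\oplus1}\rangle:i=1,\dots,n\right)\le n-2+\sum_{i=1}^{n}\big|\langle V_i\rangle-\langle W_i\rangle\big|
\]
are equivalent, the standing assumption being that each pair $(V_i,W_{i\oplus1})$ is a genuine $\pm1$ distribution, so that Lemma~\ref{lem:joint-of-two} applies to it. Abbreviate $c_i=\langle V_iW_{i\oplus1}\rangle$, $\delta_i=\big|\langle V_i\rangle-\langle W_i\rangle\big|$ and $\gamma_i=\big|\langle V_i\rangle-\langle W_{i\oplus1}\rangle\big|$; then $|c_i|\le1$ trivially, while Lemma~\ref{lem:joint-of-two} gives $c_i\le1-\gamma_i$ and $\delta_i+\big|\langle V_i\rangle+\langle W_i\rangle\big|=2\max\{|\langle V_i\rangle|,|\langle W_i\rangle|\}\le2$ for every $i$. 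First I would apply Lemma~\ref{lem:interesting-trivial} to the numbers $1-\delta_1,\dots,1-\delta_n$, producing two exhaustive cases.

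In the first case some distinct $j,k$ satisfy $(1-\delta_j)+(1-\delta_k)\le0$, i.e.\ $\delta_j+\delta_k\ge2$, hence $\sum_i\delta_i\ge2$; here I would show that $(\star)$ and $(\dagger)$ both hold, so the equivalence is immediate. For $(\dagger)$: $\so(c_i:i)\le\sum_i|c_i|\le n\le n-2+\sum_i\delta_i$. For $(\star)$: build, for each $i$, the maximal coupling of the connection $\{V_i,W_i\}$ (it exists because $\delta_i+\big|\langle V_i\rangle+\langle W_i\rangle\big|\le2$), giving jointly distributed $(V_i,W_i)$ with $\langle V_iW_i\rangle=1-\delta_i$; since $\big|\langle V_i\rangle+\langle W_{i\oplus1}\rangle\big|-1\le c_i\le1-\gamma_i$, Lemma~\ref{lem:trivialcase} with $\rho_i=c_i$ produces a joint distribution of all the $(V_i,W_i)$ in which $\langle V_iW_{i\oplus1}\rangle=c_i$ for every $i$. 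Because a $\pm1$ pair distribution is determined by its three expectations, this is a coupling of the whole system that agrees with the given bunches and realises $\Pr[V_i\ne W_i]=\tfrac12\delta_i$, the minimum possible; by Theorem~\ref{thm:main-criterion} the existence of such a coupling is equivalent to $(\star)$, so $(\star)$ holds.

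In the second case every pair of distinct indices satisfies $(1-\delta_j)+(1-\delta_k)>0$, so by Lemma~\ref{lem:interesting-trivial} condition~1 holds for $1-\delta_1,\dots,1-\delta_n$; let $p$ be the index of the largest $\delta_i$, so $\delta_i<1$ for all $i\ne p$. Lemma~\ref{lem:expand_s0_s1} then yields $\se(1-\delta_i:i)=n-\sum_i\delta_i$ and $\so(1-\delta_i:i)=n-2-\sum_i\delta_i+2\delta_p$, and Lemma~\ref{lem:s0_s1_split} gives
\[
\so\left(c_i,1-\delta_i:i\right)=\max\Bigl\{\se(c_i:i)+\so(1-\delta_i:i),\ \so(c_i:i)+\se(1-\delta_i:i)\Bigr\}.
\]
Thus $(\star)$ is equivalent to the conjunction of $\so(c_i:i)\le n-2+\sum_i\delta_i$, which is exactly $(\dagger)$, and $\se(c_i:i)\le n-2\delta_p+\sum_i\delta_i$. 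It then remains to prove this last inequality unconditionally, after which $(\star)\Leftrightarrow(\dagger)$. For that, pick signs $\mu_1,\dots,\mu_n$ with $\prod_i\mu_i=1$ attaining $\se(c_i:i)=\sum_i\mu_ic_i$; choose $\varepsilon_1,\dots,\varepsilon_n\in\{-1,1\}$ with $\varepsilon_i\varepsilon_{i\oplus1}=\mu_i$ (possible since $\prod_i\mu_i=1$) and apply Lemma~\ref{lem:WLOG} with these signs, so the new bunch product expectations become $\hat c_i=\mu_ic_i$ while the $\delta_i$ are unchanged. Writing $\hat\gamma_i=\big|\langle\hat V_i\rangle-\langle\hat W_{i\oplus1}\rangle\big|$, Lemma~\ref{lem:joint-of-two} gives $\hat c_i\le1-\hat\gamma_i$, and walking from $\hat V_p$ to $\hat W_p$ the long way around the $2n$-cycle of variables $W_1,V_1,W_2,V_2,\dots,W_n,V_n$---a path traversing every $\hat\gamma_i$ and every $\delta_i$ with $i\ne p$---the triangle inequality gives $\delta_p\le\sum_i\hat\gamma_i+\sum_{i\ne p}\delta_i$. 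Therefore $\se(c_i:i)=\sum_i\hat c_i\le n-\sum_i\hat\gamma_i\le n-\delta_p+\sum_{i\ne p}\delta_i=n-2\delta_p+\sum_i\delta_i$, as required.

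The step I expect to be the main obstacle is exactly this last one, $\se(c_i:i)\le n-2\delta_p+\sum_i\delta_i$: it must combine a sign change (the role of Lemma~\ref{lem:WLOG}), used to turn $\se(c_i:i)$ into an honest sum of bunch expectations---with the attendant bookkeeping that the resulting $\hat\gamma_i$ are the quantities fed into the triangle inequality while $\delta_i$ stays sign-invariant---with the cyclic triangle inequality $2\delta_p-\sum_i\delta_i=\delta_p-\sum_{i\ne p}\delta_i\le\sum_i\gamma_i$ that also underlies Theorem~\ref{thm:A1-An}. One must also check that the two cases genuinely exhaust all possibilities (the strict versus non-strict boundary being handled by Lemma~\ref{lem:interesting-trivial}) and that Lemma~\ref{lem:trivialcase} applies with the particular choice $\rho_i=c_i$; the remaining ingredients (Lemmas~\ref{lem:s0_s1_split} and~\ref{lem:expand_s0_s1} and Theorem~\ref{thm:main-criterion}) are used essentially off the shelf.
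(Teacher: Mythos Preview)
Your proposal is correct and follows essentially the same route as the paper's proof: both apply Lemma~\ref{lem:interesting-trivial} to the numbers $1-\delta_i$, dispose of the ``degenerate'' case via Lemma~\ref{lem:trivialcase} plus Theorem~\ref{thm:main-criterion}, and in the remaining case use Lemma~\ref{lem:s0_s1_split} to split $(\star)$ into two inequalities, one of which is $(\dagger)$ and the other of which always holds by the cyclic triangle inequality~(\ref{eq:cyclic-triangle}). The only cosmetic difference is that the paper normalizes the bunch correlations $c_i$ up front via Lemma~\ref{lem:WLOG-permutation} to obtain $\se(c_i:i)=\sum_ic_i$ directly, whereas you leave the $c_i$ untouched and instead realize the optimal signs for $\se(c_i:i)$ through a mid-proof application of Lemma~\ref{lem:WLOG}; the resulting bound $\se(c_i:i)\le n-\sum_i\hat\gamma_i$ and the subsequent triangle-inequality step are identical in content.
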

\begin{equation}
\so\left(\left\langle V_{i}W_{i\oplus1}\right\rangle ,1-\left|\left\langle V_{i}\right\rangle -\left\langle W_{i}\right\rangle \right|:i=1,\dots,n\right)\le2n-2\label{eq:master}
\end{equation}
is equivalent to
\begin{equation}
\so\left(\left\langle V_{i}W_{i\oplus1}\right\rangle :i=1,\dots,n\right)\le n-2+\sum_{i=1}^{n}\left|\left\langle V_{i}\right\rangle -\left\langle W_{i}\right\rangle \right|.\label{eq:simpleform}
\end{equation}

\begin{proof}
First, without loss of generality (by Lemmas~\ref{lem:WLOG} and
\ref{lem:WLOG-permutation}), we assume $\left|\left\langle V_{n}W_{1}\right\rangle \right|\le\left\langle V_{i}W_{i\oplus1}\right\rangle $,
$i=1,\dots,n-1$. By Lemma~\ref{lem:expand_s0_s1}, this condition
implies 
\begin{eqnarray}
\se\left(\left\langle V_{i}W_{i\oplus1}\right\rangle :i=1,\dots,n\right) & = & \sum_{i=1}^{n}\left\langle V_{i}W_{i\oplus1}\right\rangle ,\label{eq:s0corr}\\
\so\left(\left\langle V_{i}W_{i\oplus1}\right\rangle :i=1,\dots,n\right) & = & \sum_{i=1}^{n-1}\left\langle V_{i}W_{i\oplus1}\right\rangle -\left\langle V_{n}W_{1}\right\rangle .\label{eq:s1corr}
\end{eqnarray}
By Lemma~\ref{lem:s0_s1_split}, the criterion (\ref{eq:master})
is equivalent to the conjunction of
\begin{align}
\so\left(\left\langle V_{i}W_{i\oplus1}\right\rangle :i=1,\dots,n\right)+\se\left(1-\left|\left\langle V_{i}\right\rangle -\left\langle W_{i}\right\rangle \right|:i=1,\dots,n\right) & \le2n-2,\label{eq:s1s0}\\
\se\left(\left\langle V_{i}W_{i\oplus1}\right\rangle :i=1,\dots,n\right)+\so\left(1-\left|\left\langle V_{i}\right\rangle -\left\langle W_{i}\right\rangle \right|:i=1,\dots,n\right) & \le2n-2.\label{eq:s0s1}
\end{align}

\textbf{Case 1.} Suppose now that that the terms $1-\left|\left\langle V_{i}\right\rangle -\left\langle W_{i}\right\rangle \right|$
satisfy for some $k$ the condition $1-\left|\left\langle V_{i}\right\rangle -\left\langle W_{i}\right\rangle \right|\ge\left|1-\left|\left\langle V_{k}\right\rangle -\left\langle W_{k}\right\rangle \right|\right|$
for all $i\ne k.$ Then, Lemma~\ref{lem:expand_s0_s1} yields
\begin{align}
\se\left(1-\left|\left\langle V_{i}\right\rangle -\left\langle W_{i}\right\rangle \right|:i=1,\dots,n\right) & =n-\sum_{i=1}^{n}\left|\left\langle V_{i}\right\rangle -\left\langle W_{i}\right\rangle \right|,\label{eq:s0conn}\\
\so\left(1-\left|\left\langle V_{i}\right\rangle -\left\langle W_{i}\right\rangle \right|:i=1,\dots,n\right) & =n-2-\sum_{i\ne k}\left|\left\langle V_{i}\right\rangle -\left\langle W_{i}\right\rangle \right|+\left|\left\langle V_{k}\right\rangle -\left\langle W_{k}\right\rangle \right|.\label{eq:s1conn}
\end{align}
Now (\ref{eq:s0conn}) implies that (\ref{eq:s1s0}) is equivalent
to (\ref{eq:simpleform}). Furthermore, using (\ref{eq:s1conn}) and
(\ref{eq:s0corr}), the left side of (\ref{eq:s0s1}) becomes 
\begin{align*}
 & \sum_{i=1}^{n}\left\langle V_{i}W_{i\oplus1}\right\rangle +n-2-\sum_{i\ne k}\left|\left\langle V_{i}\right\rangle -\left\langle W_{i}\right\rangle \right|+\left|\left\langle V_{k}\right\rangle -\left\langle W_{k}\right\rangle \right|\\
 & \le2n-2-\sum_{i=1}^{n}\left|\left\langle V_{i}\right\rangle -\left\langle W_{i\oplus1}\right\rangle \right|-\sum_{i\ne k}\left|\left\langle V_{i}\right\rangle -\left\langle W_{i}\right\rangle \right|+\left|\left\langle V_{k}\right\rangle -\left\langle W_{k}\right\rangle \right|
\end{align*}
which, by the triangle inequality
\begin{equation}
\left|\left\langle V_{k}\right\rangle -\left\langle W_{k}\right\rangle \right|\le\sum_{i=1}^{n}\left|\left\langle V_{i}\right\rangle -\left\langle W_{i\oplus1}\right\rangle \right|+\sum_{i\ne k}\left|\left\langle V_{i}\right\rangle -\left\langle W_{i}\right\rangle \right|\label{eq:cyclic-triangle}
\end{equation}
implies (\ref{eq:s0s1}) (i.e., (\ref{eq:s0s1}) always holds in Case
1). It follows that the two conditions for maximal noncontextuality
are equivalent under the assumption of this case.

\textbf{Case 2. }Suppose then that the assumption of Case 1 does not
hold. Then, by Lemma~\ref{lem:interesting-trivial}, the condition
(\ref{eq:trivialcasecondition}) of Lemma~\ref{lem:trivialcase}
holds and so Lemma~\ref{lem:trivialcase} implies that a joint exists,
which, by Theorem~\ref{thm:A1-An} implies that (\ref{eq:master})
holds. However, the condition (\ref{eq:trivialcasecondition}) also
yields $\big|\langle V_{j}\rangle-\langle W_{j}\rangle\big|+\left|\left\langle V_{k}\right\rangle -\left\langle W_{k}\right\rangle \right|\ge2$
which implies that that the right side of (\ref{eq:simpleform}) is
at least $n$ whereas the left side cannot exceed $n$ and so (\ref{eq:simpleform})
holds true as well. Thus, the two conditions for the existence of
a maximally noncontextual description are equivalent under this case
as well.
\end{proof}

\section{\label{sec:Proof-of-the}Proof of the conjecture on the measure of
contextuality}

In this section, we prove the following theorem, which was shown to
be correct for the special cases of $n=3$, $n=4$, and $n=5$ and
conjectured to hold for all $n\ge2$ in Ref.~\cite{DKL2015}:
\begin{thm}
\label{thm:measure}For each $i=1,\dots,n$ ($n\ge2$), let the distribution
of a pair $\left(V_{i},W_{i\oplus1}\right)$ of $\pm1$-valued random
variables be given. Then, the minimum possible value of 
\[
\Delta=\sum_{i=1}^{n}\Pr\left[V_{i}\ne W_{i}\right]
\]
over all possible joints of the given pairs is
\[
\Delta_{\min}=\frac{1}{2}\max\begin{cases}
\so\left(\left\langle V_{i}W_{i\oplus1}\right\rangle :i=1,\dots,n\right)-\left(n-2\right),\\
\sum_{i=1}^{n}\left|\left\langle V_{i}\right\rangle -\left\langle W_{i}\right\rangle \right|.
\end{cases}
\]
\end{thm}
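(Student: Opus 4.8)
The plan is to recast $\Delta_{\min}$ as an existence question answered by Theorem~\ref{thm:A1-An}. Since the $V_i$ are $\pm1$-valued, in any coupling $\Pr[V_i\ne W_i]=\tfrac12(1-\langle V_iW_i\rangle)$, so $\Delta=\tfrac12\big(n-\sum_{i=1}^n\rho_i\big)$ with $\rho_i:=\langle V_iW_i\rangle$; minimizing $\Delta$ is thus the same as maximizing $M:=\sum_i\rho_i$ over all couplings of the bunches. A coupling $\big(V_1,W_2,V_2,W_3,\dots,V_n,W_1\big)$ is exactly a cyclic system on $2n$ $\pm1$ variables whose $2n$ consecutive pairs are the $n$ bunches $(V_i,W_{i\oplus1})$ (expectations $a_i:=\langle V_iW_{i\oplus1}\rangle$ fixed) interleaved with the $n$ connections $(W_i,V_i)$ (expectations $\rho_i$ free). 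By Theorem~\ref{thm:A1-An} for this $2n$-cycle, a coupling with given $\rho_1,\dots,\rho_n$ exists iff $\ell_i\le\rho_i\le\bar\rho_i$ for every $i$, where $\ell_i:=|\langle V_i\rangle+\langle W_i\rangle|-1$ and $\bar\rho_i:=1-|\langle V_i\rangle-\langle W_i\rangle|$ (the bunch pairs satisfy Lemma~\ref{lem:joint-of-two} automatically), and $\so(a_1,\dots,a_n,\rho_1,\dots,\rho_n)\le 2n-2$. Writing $S_1:=\so(a_1,\dots,a_n)$ and $D:=\sum_i|\langle V_i\rangle-\langle W_i\rangle|$, the theorem reduces to proving $M=\min\{\,2n-2-S_1,\ n-D\,\}$, i.e.\ $\Delta_{\min}=\tfrac12(n-M)=\tfrac12\max\{\,S_1-(n-2),\,D\,\}$.

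The bound $M\le\min\{\,2n-2-S_1,\ n-D\,\}$ is the easy half: the box gives $M\le\sum_i\bar\rho_i=n-D$, while by Lemma~\ref{lem:s0_s1_split} any feasible $\rho$ satisfies $2n-2\ge\so(a_1,\dots,a_n,\rho_1,\dots,\rho_n)\ge\so(a_1,\dots,a_n)+\se(\rho_1,\dots,\rho_n)\ge S_1+\sum_i\rho_i$, the last step because $\se(\rho_1,\dots,\rho_n)\ge\sum_i\rho_i$ (all signs $+1$). This already yields $\Delta_{\min}\ge\tfrac12\max\{S_1-(n-2),D\}$.

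For the reverse inequality I would split into two cases. If $S_1\le n-2+D$, equivalently $2n-2-S_1\ge n-D$, then by Theorem~\ref{thm:main-criterion} (or by Theorem~\ref{thm:equivalence}) a coupling with every $\Pr[V_i\ne W_i]$ minimal exists, i.e.\ one may take $\rho_i=\bar\rho_i$ for all $i$, so $M=n-D$ and we are done. In the complementary (contextual) case $S_1>n-2+D$, we must produce $\rho$ with $\ell_i\le\rho_i\le\bar\rho_i$, $\sum_i\rho_i=T:=2n-2-S_1$, and $\so(a_1,\dots,a_n,\rho_1,\dots,\rho_n)\le 2n-2$. By Lemmas~\ref{lem:WLOG} and~\ref{lem:WLOG-permutation} assume $a_i\ge|a_n|\ge0$ for $i<n$; then $\se(a_1,\dots,a_n)=\sum_i a_i$ and $\so(a_1,\dots,a_n)=\sum_{i<n}a_i-a_n=S_1$, and Lemma~\ref{lem:s0_s1_split} splits the constraint $\so(a,\rho)\le 2n-2$ into $\se(\rho_1,\dots,\rho_n)\le T$ and $\so(\rho_1,\dots,\rho_n)\le 2n-2-\sum_i a_i$. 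Since we demand $\sum_i\rho_i=T$ and always $\se(\rho)\ge\sum_i\rho_i$, the first forces $\se(\rho)=\sum_i\rho_i$, which by Lemmas~\ref{lem:interesting-trivial} and~\ref{lem:expand_s0_s1} is equivalent to $(\rho_i)$ satisfying condition~1 of Lemma~\ref{lem:interesting-trivial}; and then $\so(\rho)=\sum_i\rho_i-2\min_i\rho_i$, so the second constraint becomes just $\min_i\rho_i\ge a_n$. Hence everything reduces to finding $\rho$ in the boxes with $\sum_i\rho_i=T$, with $\min_i\rho_i\ge a_n$, and with $(\rho_i)$ obeying condition~1 of Lemma~\ref{lem:interesting-trivial}.

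This last construction is the main obstacle. One starts from the maximal point $(\bar\rho_1,\dots,\bar\rho_n)$, whose coordinate sum exceeds $T$ by $\epsilon:=S_1-(n-2)-D>0$, and lowers the coordinates by a total of $\epsilon$ while keeping each $\rho_i\ge\max(\ell_i,a_n)$ and preserving condition~1. Several facts special to the contextual case are needed and must be verified along the way: (i) $D<2$ (because $S_1\le n$); (ii) at most one $\bar\rho_i$ is negative and $(\bar\rho_i)$ itself obeys condition~1 of Lemma~\ref{lem:interesting-trivial} — otherwise the dichotomy used in the proof of Theorem~\ref{thm:equivalence} would force $S_1\le n-2+D$; (iii) $a_n<1-D\le\bar\rho_i$ for every $i$, so the intervals $[\max(\ell_i,a_n),\bar\rho_i]$ are non-empty; and (iv) $\sum_i\max(\ell_i,a_n)\le T$, so these intervals are collectively wide enough — this last reduces, via the two-sided bounds $|\langle V_i\rangle+\langle W_{i\oplus1}\rangle|-1\le a_i\le1-|\langle V_i\rangle-\langle W_{i\oplus1}\rangle|$ of Lemma~\ref{lem:joint-of-two}, to a cyclic chain of triangle inequalities in the spirit of Lemma~\ref{lem:inequalities}. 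Granting (i)--(iv), lowering the coordinates greedily (smallest $\bar\rho_i$ first, spilling over to the next smallest, which leaves condition~1 intact precisely because the $\bar\rho_i$ already satisfy it) reaches a feasible $\rho$ with $\sum_i\rho_i=T$; Theorem~\ref{thm:A1-An} then supplies the coupling, so $M\ge T$ and $\Delta_{\min}=\tfrac12(n-T)=\tfrac12\big(S_1-(n-2)\big)=\tfrac12\max\{S_1-(n-2),D\}$, finishing the proof. I expect step (iv), together with the bookkeeping that the greedy lowering really preserves condition~1 in every configuration (in particular when some $\bar\rho_i$ is negative), to be the most delicate points.
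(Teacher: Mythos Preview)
Your reformulation into maximizing $S=\sum_i\langle V_iW_i\rangle$ and the easy upper bound $S\le\min\{2n-2-S_1,\,n-D\}$ match the paper exactly (the paper records these as Theorem~\ref{thm:measure-corr} and the first display in the proof of Theorem~\ref{thm:measure-correlations}). The case split is also the same: in the noncontextual case invoke the criterion, in the contextual case exhibit a coupling with $S=T:=2n-2-S_1$. Where you diverge is in how that coupling is produced. The paper does \emph{not} construct $\rho$ directly; instead it proves a connectedness lemma (Lemma~\ref{lem:S=00003DM}): the set of connection vectors obeying Lemma~\ref{lem:joint-of-two} together with a ``condition-1''-type restriction~(\ref{eq:restriction}) is connected, contains the maximal point $\bar\rho$ (where $S>M$) and, by a four-way case analysis, a point with $S\le M$; the intermediate value theorem then yields $S=M$, and at that point the $\so$ constraint is verified via the cyclic triangle inequality~(\ref{eq:cyclic-triangle}). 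Your route instead reduces the $\so$ constraint in normal form to $\min_i\rho_i\ge a_n$ plus condition~1 of Lemma~\ref{lem:interesting-trivial}, then lowers greedily from $\bar\rho$ inside the box $\prod_i[\max(\ell_i,a_n),\bar\rho_i]$ until the sum hits $T$. Your observation~(iii), $a_n<1-D\le\bar\rho_i$ in the contextual case, is correct and in fact sharper than the paper's treatment --- it makes the paper's Case~3 vacuous. What remains in your sketch is exactly what you flag: claim~(iv) that $\sum_i\max(\ell_i,a_n)\le T$, and, when $a_n<0$, the bookkeeping that the lowering can be arranged so that at most one $\rho_i$ goes negative while the others stay $\ge|\rho_k|$. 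These are doable but genuinely fiddly; the paper's continuity argument sidesteps them at the price of the four-case search for a starting configuration with $S\le M$.
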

\begin{rem}
Here the bottom expression for $\Delta_{\min}$ corresponds to the
case that a maximally noncontextual description exists, i.e., that
all probabilities $\Pr\left[V_{i}\ne W_{i}\right]$, $i=1,\dots,n$,
are at their minimum values allowed by the marginals. In Ref.~\cite{DKL2015},
the excess of $\Delta_{\min}$ over its minimum possible value $\frac{1}{2}\sum_{i=1}^{n}\left|\left\langle V_{i}\right\rangle -\left\langle W_{i}\right\rangle \right|$
given the marginals is defined as a measure of contextuality.
\end{rem}
To prove Theorem~\ref{thm:measure}, we will first rewrite it in
a more accessible form for our proof. Noting that for a pair of $\pm1$-valued
random variables, the probability $\Pr\left[V_{i}\ne W_{i}\right]$
is fully determined by the expectation $\left\langle V_{i}W_{i}\right\rangle $
through the identity 
\begin{equation}
\left\langle V_{i}W_{i}\right\rangle =1-2\Pr\left[V_{i}\ne W_{i}\right],\label{eq:corr-delta-identity}
\end{equation}
we see that minimizing the probability $\Pr\left[V_{i}\ne W_{i}\right]$
is equivalent to maximizing the connection expectation $\left\langle V_{i}W_{i}\right\rangle $.
Hence, a description is maximally noncontextual if and only if the
connection expectations satisfy 
\[
\left\langle V_{i}W_{i}\right\rangle =1-\left|\left\langle V_{i}\right\rangle -\left\langle W_{i}\right\rangle \right|,\quad i=1,\dots,n,
\]
that is, all the connection expectations are at their maximum values
as given by Lemma~\ref{lem:joint-of-two}. Using (\ref{eq:corr-delta-identity}),
we can rewrite Theorem~\ref{thm:measure} in the following equivalent
form:
\begin{thm}
\label{thm:measure-corr}For each $i=1,\dots,n$ ($n\ge2$), let the
distribution of a pair $\left(V_{i},W_{i\oplus1}\right)$ of $\pm1$-valued
random variables be given. The maximum possible value of 
\[
S=\sum_{i=1}^{n}\left\langle V_{i}W_{i}\right\rangle 
\]
over all possible joints of the given pairs is
\[
M:=\min\begin{cases}
2n-2-\so\left(\left\langle V_{i}W_{i\oplus1}\right\rangle :i=1,\dots,n\right),\\
n-\sum_{i=1}^{n}\left|\left\langle V_{i}\right\rangle -\left\langle W_{i}\right\rangle \right|.
\end{cases}
\]

\end{thm}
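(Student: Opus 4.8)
\section*{Proof proposal for Theorem~\ref{thm:measure-corr}}

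The plan is to prove the two inequalities $S_{\max}\le M$ and $S_{\max}\ge M$ separately, where $S_{\max}$ denotes the maximum of $S=\sum_{i=1}^n\langle V_iW_i\rangle$ over all couplings of the given bunches. For the upper bound $S_{\max}\le M$, I would argue that each term of the two competing expressions is individually an upper bound on $S$. The bound $S\le n-\sum_{i=1}^n\bigl|\langle V_i\rangle-\langle W_i\rangle\bigr|$ is immediate from Lemma~\ref{lem:joint-of-two} applied term by term, since in \emph{any} coupling the marginal pair $(V_i,W_i)$ must satisfy $\langle V_iW_i\rangle\le 1-\bigl|\langle V_i\rangle-\langle W_i\rangle\bigr|$. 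The bound $S\le 2n-2-\so\bigl(\langle V_iW_{i\oplus1}\rangle:i=1,\dots,n\bigr)$ is the interesting one: given a coupling $(V_1,W_2,V_2,W_3,\dots,V_n,W_1)$ of the whole system, all $2n$ variables are jointly distributed, so by Theorem~\ref{thm:A1-An} (applied to the $2n$-cycle $V_1,W_2,V_2,W_3,\dots,V_n,W_1$, whose consecutive products are $\langle V_iW_{i\oplus1}\rangle$ and $\langle V_iW_i\rangle$ alternately) we get $\so$ of these $2n$ consecutive-product expectations $\le 2n-2$; then by the $\se/\so$ splitting of Lemma~\ref{lem:s0_s1_split} together with the elementary fact that $\se(x_1,\dots,x_n)\ge\sum x_i$ always (take all $m_i=1$; the product is $1$), one extracts $\so\bigl(\langle V_iW_{i\oplus1}\rangle:i\bigr)+\sum_i\langle V_iW_i\rangle\le 2n-2$, which rearranges to the claimed bound on $S$. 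Hence $S\le M$ for every coupling, so $S_{\max}\le M$.

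For the lower bound $S_{\max}\ge M$ I need to exhibit a coupling achieving $S=M$. Here I would split on which of the two expressions in $M$ is the minimum, and I would invoke the normalization Lemmas~\ref{lem:WLOG} and~\ref{lem:WLOG-permutation} to assume $\bigl|\langle V_nW_1\rangle\bigr|\le\langle V_iW_{i\oplus1}\rangle$ for $i=1,\dots,n-1$, so that formulas \eqref{eq:s0corr}--\eqref{eq:s1corr} hold. If the minimum is $n-\sum_i\bigl|\langle V_i\rangle-\langle W_i\rangle\bigr|$, I want a coupling in which every connection is maximal, i.e.\ $\langle V_iW_i\rangle=1-\bigl|\langle V_i\rangle-\langle W_i\rangle\bigr|$; by Theorem~\ref{thm:main-criterion} such a coupling exists precisely when the main criterion \eqref{eq:master} holds, and by Theorem~\ref{thm:equivalence} that is equivalent to \eqref{eq:simpleform}, which one checks is exactly the statement that this branch is the active (smaller) one — so this case is handled by the already-proved machinery. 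If instead the minimum is $2n-2-\so\bigl(\langle V_iW_{i\oplus1}\rangle:i\bigr)$, I need a coupling in which $\sum_i\langle V_iW_i\rangle$ attains $2n-2-\so(\dots)$; the natural attempt is to choose connection expectations $\langle V_iW_i\rangle=\rho_i$ within the Lemma~\ref{lem:joint-of-two} ranges, summing to the target, and such that the $2n$-cycle built from the $\langle V_iW_{i\oplus1}\rangle$'s and the $\rho_i$'s satisfies the Theorem~\ref{thm:A1-An} condition $\so(\text{all }2n)\le 2n-2$ with equality forced by the splitting lemma; one then reads off such $\rho_i$ explicitly (e.g.\ pushing one connection to a boundary value and leaving the rest maximal, mirroring how $\so$ of the bunch correlations is realized by a single sign flip at index $n$).

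The main obstacle I anticipate is the lower-bound construction in the second case: verifying that the explicitly chosen $\rho_i$ simultaneously (a) lie in the admissible intervals of \eqref{eq:rho_range}, (b) sum to the target value $2n-2-\so(\dots)$, and (c) make the $2n$-cycle condition of Theorem~\ref{thm:A1-An} hold. Steps (a) and (c) are where the case-split from Lemma~\ref{lem:interesting-trivial} on the terms $1-\bigl|\langle V_i\rangle-\langle W_i\rangle\bigr|$ will again be needed: in the ``Case~1'' configuration one sets all connections maximal except possibly a boundary adjustment at the minimizing index, whereas in the ``Case~2'' configuration Lemma~\ref{lem:trivialcase} directly delivers a coupling realizing \emph{any} prescribed bunch correlations $\rho_i$ in range, which is more than enough. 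I expect the bookkeeping to parallel closely the proof of Theorem~\ref{thm:equivalence}, with Lemma~\ref{lem:inequalities} reappearing to control the boundary terms; the conceptual content is entirely in the reduction via Theorem~\ref{thm:A1-An} and Lemma~\ref{lem:s0_s1_split}, and the rest is careful but routine sign-and-interval chasing.
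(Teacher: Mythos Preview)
Your upper bound $S_{\max}\le M$ is exactly the paper's: the bound $S\le n-\sum_i|\langle V_i\rangle-\langle W_i\rangle|$ from Lemma~\ref{lem:joint-of-two}, and the bound $S\le 2n-2-\so(\langle V_iW_{i\oplus1}\rangle:i)$ via Theorem~\ref{thm:A1-An} on the $2n$-cycle followed by Lemma~\ref{lem:s0_s1_split} and $\se\ge\sum$. Your case split on which branch of $M$ is active, with the noncontextual branch handled by Theorem~\ref{thm:main-criterion} and Theorem~\ref{thm:equivalence}, also matches the paper's reduction to Theorem~\ref{thm:measure-correlations}.

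The gap is in the contextual branch. First, your invocation of Lemma~\ref{lem:trivialcase} there is misdirected: if the main criterion fails, then by the contrapositive of Lemma~\ref{lem:trivialcase} the terms $1-|\langle V_i\rangle-\langle W_i\rangle|$ are \emph{never} in ``Case~2'' of Lemma~\ref{lem:interesting-trivial}; they must satisfy condition~1 for some index $k$, i.e.\ \eqref{eq:conn-cond-for-all-cases}. So that sub-case does not arise. Second, and more seriously, the plan ``set all connections maximal except one, adjust that one to hit the target sum'' does not always produce a value in the admissible interval $[\,|\langle V_j\rangle+\langle W_j\rangle|-1,\,1-|\langle V_j\rangle-\langle W_j\rangle|\,]$, nor does it obviously satisfy the $2n$-cycle $\so$ bound of Theorem~\ref{thm:A1-An}. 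The paper does not attempt an explicit single formula for the $\rho_i$. Instead it proves a continuity lemma (Lemma~\ref{lem:S=00003DM}): within the connected set of connection vectors satisfying Lemma~\ref{lem:joint-of-two} and the structural constraint~\eqref{eq:restriction}, the all-maximal vector has $S>M$ (since \eqref{eq:simpleform} fails), so if one can exhibit \emph{any} vector in that set with $S\le M$, the intermediate value theorem yields one with $S=M$, and the lemma then checks directly that such an $S=M$ vector satisfies the Theorem~\ref{thm:A1-An} condition. Producing that initial $S\le M$ vector still requires a four-way case analysis (Cases~1--4 in the proof of Theorem~\ref{thm:measure-correlations}), distinguished by how $|\langle V_i\rangle+\langle W_i\rangle|-1$, $1-|\langle V_k\rangle-\langle W_k\rangle|$, and $\langle V_nW_1\rangle$ compare. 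So the ``routine sign-and-interval chasing'' you anticipate is real, but it is organized around an intermediate-value step you have not identified, and it is finer than the two-case split you propose.
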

The proof needs several lemmas:
\begin{lem}
\label{lem:s0_plus_s1}Suppose $a_{1},\dots,a_{n}\in[-1,1]$. Then,
$\se(a_{1},\dots,a_{n})+\so(a_{1},\dots,a_{n})\le2n-2$.
\end{lem}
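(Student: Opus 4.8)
The plan is to unwind the definitions of $\se$ and $\so$ and run a short parity argument on the optimal sign patterns. Pick sign vectors $m=(m_1,\dots,m_n)\in\{-1,1\}^n$ with $\prod_{i=1}^n m_i=1$ achieving $\se(a_1,\dots,a_n)=\sum_{i=1}^n m_ia_i$, and $m'=(m'_1,\dots,m'_n)\in\{-1,1\}^n$ with $\prod_{i=1}^n m'_i=-1$ achieving $\so(a_1,\dots,a_n)=\sum_{i=1}^n m'_ia_i$ (these exist since the feasible sets are finite and nonempty for $n\ge2$). Then $\se(a_1,\dots,a_n)+\so(a_1,\dots,a_n)=\sum_{i=1}^n (m_i+m'_i)a_i$.

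The key observation is that $\prod_{i=1}^n m_im'_i=-1$, so $m$ and $m'$ disagree in an odd number of coordinates; in particular they disagree in at least one coordinate. Let $D=\{i:m_i\ne m'_i\}$ and $A=\{i:m_i=m'_i\}$, so $|A|=n-|D|\le n-1$. For $i\in D$ we have $m_i+m'_i=0$, while for $i\in A$ we have $m_i+m'_i=2m_i$, so by the hypothesis $a_i\in[-1,1]$ the term $(m_i+m'_i)a_i=2m_ia_i\le 2|a_i|\le 2$. Summing over the two blocks gives $\se(a_1,\dots,a_n)+\so(a_1,\dots,a_n)=\sum_{i\in A}2m_ia_i\le 2|A|\le 2(n-1)=2n-2$, which is the claim.

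I do not expect any real obstacle here: the statement reduces to the fact that two $\pm1$-patterns of opposite total parity always differ somewhere, combined with the coordinatewise bound $|a_i|\le1$. The only point worth a moment's care is that one is entitled to choose the maximizers $m$ and $m'$ independently before adding, since $\se$ and $\so$ are defined by separate optimizations; once a pair of maximizers is fixed, the parity constraint $\prod m_i=1\ne-1=\prod m'_i$ does the rest.
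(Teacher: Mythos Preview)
Your argument is correct. The paper states this lemma without proof (it is treated as an elementary observation needed for Lemma~\ref{lem:chain-joint-range}), so there is nothing to compare against; your parity argument---that maximizing sign patterns for $\se$ and $\so$ must differ in at least one coordinate, killing that term and bounding each of the remaining $\le n-1$ terms by $2$---is exactly the kind of short verification the authors presumably had in mind.
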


\begin{lem}
\label{lem:chain-joint-range}Jointly distributed $\pm1$-valued random
variables $A_{1},\dots,A_{n}$ ($n\ge2$) with expectations $\left\langle A_{1}\right\rangle ,\dots,\left\langle A_{n}\right\rangle ,\left\langle A_{1}A_{2}\right\rangle ,\dots,\left\langle A_{n-1}A_{n}\right\rangle ,\left\langle A_{n}A_{1}\right\rangle $
exist if and only if $A=A_{i}$ and $B=A_{i\oplus1}$ satisfy the
condition of Lemma \ref{lem:joint-of-two} for all $i=1,\dots,n$
and
\[
\se\left(\left\langle A_{1}A_{2}\right\rangle ,\dots,\left\langle A_{n-1}A_{n}\right\rangle \right)-\left(n-2\right)\le\left\langle A_{n}A_{1}\right\rangle \le(n-2)-\so\left(\left\langle A_{1}A_{2}\right\rangle ,\dots,\left\langle A_{n-1}A_{n}\right\rangle \right).
\]
Furthermore, the range given by the above inequalities is always nonempty
and intersects the range
\[
\left|\left\langle A_{n}\right\rangle +\left\langle A_{1}\right\rangle \right|-1\le\left\langle A_{n}A_{1}\right\rangle \le1-\left|\left\langle A_{n}\right\rangle -\left\langle A_{1}\right\rangle \right|
\]
given by Lemma~\ref{lem:joint-of-two}.\end{lem}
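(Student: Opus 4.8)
The plan is to reduce the whole statement to the already-cited Theorem~\ref{thm:A1-An} by means of one elementary identity for $\so$. Assume first $n\ge3$; the case $n=2$ is degenerate, since then $\langle A_nA_1\rangle=\langle A_1A_2\rangle$, the asserted two-sided bound collapses to the trivially true equality $\langle A_1A_2\rangle\le\langle A_nA_1\rangle\le\langle A_1A_2\rangle$ (reading $\se,\so$ of a single argument as $\se(x)=x$, $\so(x)=-x$), and existence together with both range properties follows straight from Lemma~\ref{lem:joint-of-two}. Write $c_i:=\langle A_iA_{i\oplus1}\rangle$ for $i=1,\dots,n$. By Theorem~\ref{thm:A1-An}, the required cyclic joint exists if and only if every pair $(A_i,A_{i\oplus1})$, $i=1,\dots,n$, satisfies Lemma~\ref{lem:joint-of-two} and $\so(c_1,\dots,c_{n-1},c_n)\le n-2$. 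The first step is to record the identity
\[
\so(c_1,\dots,c_{n-1},c_n)=\max\bigl\{\,\so(c_1,\dots,c_{n-1})+c_n,\ \se(c_1,\dots,c_{n-1})-c_n\,\bigr\},
\]
which follows immediately from definition~\eqref{eq:s1}: split the maximizing sign vector $(m_1,\dots,m_n)$ according to the value of $m_n$; when $m_n=+1$ the remaining signs multiply to $-1$ and their optimal contribution is $\so(c_1,\dots,c_{n-1})$, and when $m_n=-1$ they multiply to $+1$ and contribute $\se(c_1,\dots,c_{n-1})$. Hence $\so(c_1,\dots,c_n)\le n-2$ is equivalent to the conjunction of $\so(c_1,\dots,c_{n-1})+c_n\le n-2$ and $\se(c_1,\dots,c_{n-1})-c_n\le n-2$, i.e.\ to $\se(c_1,\dots,c_{n-1})-(n-2)\le c_n\le(n-2)-\so(c_1,\dots,c_{n-1})$ with $c_n=\langle A_nA_1\rangle$; combined with Theorem~\ref{thm:A1-An} this is exactly the claimed ``if and only if''.

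For the nonemptiness of that range one needs $\se(c_1,\dots,c_{n-1})+\so(c_1,\dots,c_{n-1})\le2(n-2)$. The pairwise conditions of Lemma~\ref{lem:joint-of-two} in particular force $c_1,\dots,c_{n-1}\in[-1,1]$, so applying Lemma~\ref{lem:s0_plus_s1} to $c_1,\dots,c_{n-1}$ bounds $\se(c_1,\dots,c_{n-1})+\so(c_1,\dots,c_{n-1})$ by $2(n-1)-2=2(n-2)$, which is exactly what is needed. (Nonemptiness is also subsumed by the intersection argument below, which exhibits an explicit point of the range.)

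The intersection claim is the one part that requires an actual idea: produce the common point of the two ranges from a chain coupling. Since each pair $(A_i,A_{i\oplus1})$, $i=1,\dots,n-1$, satisfies Lemma~\ref{lem:joint-of-two}, Lemma~\ref{lem:chain-joint} furnishes a coupling of $A_1,\dots,A_n$ with the prescribed marginals and $\langle A_iA_{i+1}\rangle=c_i$; let $\rho:=\langle A_nA_1\rangle$ in this coupling. Applying Lemma~\ref{lem:joint-of-two} to the pair $(A_n,A_1)$ inside this coupling shows $|\langle A_n\rangle+\langle A_1\rangle|-1\le\rho\le1-|\langle A_n\rangle-\langle A_1\rangle|$, i.e.\ $\rho$ lies in the Lemma~\ref{lem:joint-of-two} range. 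On the other hand, this coupling is itself a full cyclic joint of $A_1,\dots,A_n$ whose $n$-th correlation is $\rho$, so the ``only if'' direction of Theorem~\ref{thm:A1-An} gives $\so(c_1,\dots,c_{n-1},\rho)\le n-2$, which by the identity of the first paragraph says precisely that $\se(c_1,\dots,c_{n-1})-(n-2)\le\rho\le(n-2)-\so(c_1,\dots,c_{n-1})$. Hence $\rho$ belongs to both ranges, so they intersect.

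The only genuine work is conceptual. The first idea is the sign-splitting identity, which converts the single scalar constraint ``$\so$ of all $n$ correlations is $\le n-2$'' into a symmetric two-sided bound on the last correlation $\langle A_nA_1\rangle$. The second idea is to notice that whatever value a chain coupling happens to assign to $\langle A_nA_1\rangle$ is automatically a witness lying in both intervals, which simultaneously yields the intersection and re-proves the nonemptiness. Everything else is a one-line invocation of Lemma~\ref{lem:joint-of-two}, Lemma~\ref{lem:chain-joint}, Lemma~\ref{lem:s0_plus_s1}, and Theorem~\ref{thm:A1-An}, with only the trivial $n=2$ case to be split off at the outset.
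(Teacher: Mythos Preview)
Your proof is correct and follows essentially the same approach as the paper's: the paper invokes Theorem~\ref{thm:A1-An} together with Lemma~\ref{lem:s0_s1_split} for the equivalence (your sign-splitting identity is precisely the one-argument special case of that lemma), and for the intersection it uses exactly your chain-coupling argument via Lemma~\ref{lem:chain-joint}, noting that the resulting $2$-marginal of $(A_n,A_1)$ automatically lies in both ranges. Your separate nonemptiness argument via Lemma~\ref{lem:s0_plus_s1} and the explicit treatment of $n=2$ are additional details the paper leaves implicit, but the underlying route is the same.
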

\begin{proof}
The first part is a direct corollary of Theorem~\ref{thm:A1-An}
and Lemma~\ref{lem:s0_s1_split}. The fact that the two ranges intersect
follows from the fact that a joint always exists for the chain with
expectations $\left\langle A_{1}A_{2}\right\rangle ,\dots,\left\langle A_{n-1}A_{n}\right\rangle $
and this joint yields some $2$-marginal for $(A_{n},A_{1})$ which
satisfies Lemma~\ref{lem:joint-of-two}. \end{proof}
\begin{lem}
\label{lem:S=00003DM}For each $i=1,\dots,n$ ($n\ge2$), let the
distribution of a pair $\left(V_{i},W_{i\oplus1}\right)$ of $\pm1$-valued
random variables be given. Suppose that the main criterion (\ref{eq:master})
does not hold and formally define the expressions
\begin{align*}
S & :=\sum_{i=1}^{n}\left\langle V_{i}W_{i}\right\rangle ,\\
M & :=2n-2-\so\left(\left\langle V_{i}W_{i\oplus1}\right\rangle :i=1,\dots,n\right).
\end{align*}
If there exists some values $\left\{ \left\langle V_{i}W_{i}\right\rangle :i=1,\dots,n\right\} $
(called connection vector) satisfying the condition of Lemma~\ref{lem:joint-of-two}
for each $i=1,\dots,n$, the condition 
\begin{equation}
\begin{array}{l}
\left\langle V_{k}W_{k}\right\rangle =1-\left|\left\langle V_{k}\right\rangle -\left\langle W_{k}\right\rangle \right|,\\
\left\langle V_{i}W_{i}\right\rangle \ge\left|1-\left|\left\langle V_{k}\right\rangle -\left\langle W_{k}\right\rangle \right|\right|,\, i\ne k
\end{array}\label{eq:restriction}
\end{equation}
for some index $k$, and the inequality $S\le M$, then there exist
another connection vector $\left\{ \left\langle V_{i}W_{i}\right\rangle :i=1,\dots,n\right\} $
such that a joint of the given observable pairs having these connection
expectations exists and satisfies the equation $S=M$.\end{lem}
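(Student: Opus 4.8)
The plan is to recast the existence of a joint with prescribed connection expectations as an arithmetic condition via Theorem~\ref{thm:A1-An}, and then to engineer a connection vector that makes $S$ reach $M$. Throughout, write $a=\big(\langle V_iW_{i\oplus1}\rangle\big)_{i=1}^{n}$ for the fixed bunch‑correlation vector and $c=\big(1-|\langle V_i\rangle-\langle W_i\rangle|\big)_{i=1}^{n}$ for the vector of connection upper bounds; by Lemmas~\ref{lem:WLOG} and~\ref{lem:WLOG-permutation} I may apply any convenient sign/permutation normalization, since this leaves $\se(a)$, $\so(a)$, $c$, $S$, and $M$ unchanged.

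First I would view $V_1,W_2,V_2,W_3,\dots,V_n,W_1$ as a cyclic chain of $2n$ $\pm1$-valued variables whose $2n$ successive product‑expectations are, alternately, the entries of $a$ and the connection expectations $\langle V_iW_i\rangle$ still to be chosen. By Theorem~\ref{thm:A1-An} together with Lemma~\ref{lem:s0_s1_split} (used to split the $2n$ arguments of $\so$ into the bunch block and the connection block), a joint of the given bunches realizing connection expectations $\big(\langle V_iW_i\rangle\big)_i$ exists iff each $(V_i,W_i)$ satisfies Lemma~\ref{lem:joint-of-two} and both $\se(a)+\so\big(\langle V_iW_i\rangle:i\big)\le2n-2$ and $\so(a)+\se\big(\langle V_iW_i\rangle:i\big)\le2n-2$. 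If moreover $\sum_i\langle V_iW_i\rangle=M=2n-2-\so(a)$, then, since always $\sum_i\langle V_iW_i\rangle\le\se\big(\langle V_iW_i\rangle:i\big)$, the second inequality forces $\se\big(\langle V_iW_i\rangle:i\big)=\sum_i\langle V_iW_i\rangle$; by Lemma~\ref{lem:interesting-trivial} (if condition~2 held, flipping the offending pair of signs would beat the all‑ones choice) this is exactly condition~1 for the connection vector, and then Lemma~\ref{lem:expand_s0_s1} gives $\so\big(\langle V_iW_i\rangle:i\big)=\sum_i\langle V_iW_i\rangle-2\min_i\langle V_iW_i\rangle$. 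So the task reduces to exhibiting a connection vector lying in the Lemma~\ref{lem:joint-of-two} boxes, obeying condition~1 of Lemma~\ref{lem:interesting-trivial}, summing to $M$, and with least entry at least $\tfrac{1}{2}\big(\se(a)-\so(a)\big)$.

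I would build this vector from the candidate supplied by the hypothesis. By \eqref{eq:restriction} the candidate obeys condition~1 of Lemma~\ref{lem:interesting-trivial} with index $k$, with $\langle V_kW_k\rangle=c_k$ and $\langle V_iW_i\rangle\ge|c_k|$ for $i\ne k$; hence also $c_i\ge\langle V_iW_i\rangle\ge|c_k|$ for $i\ne k$, so $c$ itself obeys condition~1 with index $k$ --- i.e.\ we are in ``Case~1'' of the proof of Theorem~\ref{thm:equivalence}. Also $S\le M$ by hypothesis. Now keep the $k$-th entry fixed at $c_k$ and raise the other entries continuously toward their upper bounds $c_i$, stopping the instant the total equals $M$. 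This is possible because $\sum_i c_i>M$: the failure of the main criterion \eqref{eq:master} says $\so$ of the $2n$ numbers $\langle V_iW_{i\oplus1}\rangle,1-|\langle V_i\rangle-\langle W_i\rangle|$ ($i=1,\dots,n$) exceeds $2n-2$, which by Lemma~\ref{lem:s0_s1_split} equals $\max\{\se(a)+\so(c),\,\so(a)+\se(c)\}$; in Case~1 the first term is $\le2n-2$ (this is \eqref{eq:s0s1}, established in the proof of Theorem~\ref{thm:equivalence}), so $\so(a)+\se(c)>2n-2$, and $\se(c)=\sum_i c_i$ by Lemma~\ref{lem:expand_s0_s1}, giving $\sum_i c_i>2n-2-\so(a)=M$. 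Raising only entries that were already $\ge|c_k|$ preserves condition~1 (with index $k$) and every box constraint, and the new vector still has least entry $c_k$.

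The one fact left to verify, and the crux, is $c_k\ge\tfrac{1}{2}\big(\se(a)-\so(a)\big)$. In Case~1, \eqref{eq:s0s1} reads $\se(a)+\so(c)\le2n-2$, and $\so(c)=\sum_i c_i-2c_k$ by Lemma~\ref{lem:expand_s0_s1}; combining with $\sum_i c_i>M=2n-2-\so(a)$ from the previous paragraph,
\[
\se(a)+\big(2n-2-\so(a)\big)-2c_k<\se(a)+\sum_i c_i-2c_k=\se(a)+\so(c)\le2n-2 ,
\]
hence $\se(a)-\so(a)<2c_k$, which is even stronger than needed and covers the degenerate possibility $c_k<0$ uniformly. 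With this, the constructed connection vector meets all four requirements, so Theorem~\ref{thm:A1-An} produces a joint of the given bunches realizing exactly those connection expectations, and by construction $S=\sum_i\langle V_iW_i\rangle=M$, as claimed. I expect everything outside this crux inequality to be routine bookkeeping with Lemmas~\ref{lem:s0_s1_split}, \ref{lem:expand_s0_s1}, and~\ref{lem:interesting-trivial}; the crux is precisely the point at which the hypotheses (being in Case~1, plus the failure of \eqref{eq:master}) get used.
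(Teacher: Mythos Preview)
Your proof is correct and follows essentially the same approach as the paper's: interpolate (by continuity within the admissible box constrained by Lemma~\ref{lem:joint-of-two} and condition~\eqref{eq:restriction}) between the given connection vector with $S\le M$ and the maximal vector $c$ with $S>M$ to hit $S=M$, and then verify the remaining inequality $\se(a)+\so(\langle V_iW_i\rangle:i)\le 2n-2$ via the cyclic triangle inequality. The only differences are expository: you build an explicit monotone path (raising the $i\ne k$ entries) where the paper invokes connectedness abstractly, and you package the crux inequality by quoting the already-established fact that \eqref{eq:s0s1} holds in Case~1 of Theorem~\ref{thm:equivalence}, whereas the paper re-derives the needed bound from \eqref{eq:cyclic-triangle} directly.
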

\begin{proof}
Consider the set of all connection vectors $\left\{ \left\langle V_{i}W_{i}\right\rangle :i=1,\dots,n\right\} $
satisfying the condition of Lemma~\ref{lem:joint-of-two} for each
pair $\left(V_{i},W_{i}\right)$, $i=1,\dots,n$, and the condition
(\ref{eq:restriction}). Within this set we have the assumed configuration
satisfying $S\le M$ and the configuration $\left\langle V_{i}W_{i}\right\rangle =1-\left|\left\langle V_{i}\right\rangle -\left\langle W_{i}\right\rangle \right|$,
$i=1,\dots,n$, which yields $S>M$ due to (\ref{eq:simpleform})
not being satisfied. Since $S$ is a continuous function of the vector
$\left\{ \left\langle V_{i}W_{i}\right\rangle :i=1,\dots,n\right\} $
and the conjunction of the condition of Lemma~\ref{lem:joint-of-two}
and condition (\ref{eq:restriction}) defines a connected set of connection
vectors, there exist a configuration $\left\{ \left\langle V_{i}W_{i}\right\rangle :i=1,\dots,n\right\} $
satisfying the condition of Lemma~\ref{lem:joint-of-two} and (\ref{eq:restriction})
with equality $S=M$. For this configuration, expanding the definitions
of $S$ and $M$ we obtain
\begin{equation}
\so\left(\left\langle V_{i}W_{i\oplus1}\right\rangle :i=1,\dots,n\right)+\sum_{i=1}^{n}\left\langle V_{i}W_{i}\right\rangle =2n-2.\label{eq:cont-lemma-1}
\end{equation}
Also, since (\ref{eq:restriction}) implies condition 1 of Lemma~\ref{lem:interesting-trivial}
we obtain from Lemma~\ref{lem:expand_s0_s1} 
\begin{align}
\se\left(\left\langle V_{i}W_{i}\right\rangle :i=1,\dots,n\right) & =\sum_{i=1}^{n}\left\langle V_{i}W_{i}\right\rangle ,\label{eq:cont-lemma-2}\\
\so\left(\left\langle V_{i}W_{i}\right\rangle :i=1,\dots,n\right) & =\sum_{i\ne k}\left\langle V_{i}W_{i}\right\rangle -\left(1-\left|\left\langle V_{k}\right\rangle -\left\langle W_{k}\right\rangle \right|\right).\label{eq-cont-lemma-3}
\end{align}
Now (\ref{eq:cont-lemma-1}) and (\ref{eq:cont-lemma-2}) imply
\[
\so\left(\left\langle V_{i}W_{i\oplus1}\right\rangle :i=1,\dots,n\right)+\se\left(\left\langle V_{i}W_{i}\right\rangle :i=1,\dots,n\right)=2n-2
\]
and Lemma~\ref{lem:joint-of-two} applied to (\ref{eq:cont-lemma-2})
and (\ref{eq-cont-lemma-3}) yield 
\begin{align*}
 & \se\left(\left\langle V_{i}W_{i\oplus1}\right\rangle :i=1,\dots,n\right)+\so\left(\left\langle V_{i}W_{i}\right\rangle :i=1,\dots,n\right)\\
 & \le\sum_{i=1}^{n}\left(1-\left|\left\langle V_{i}\right\rangle -\left\langle W_{i\oplus1}\right\rangle \right|\right)+\sum_{i\ne k}\left(1-\left|\left\langle V_{i}\right\rangle -\left\langle W_{i}\right\rangle \right|\right)-\left(1-\left|\left\langle V_{k}\right\rangle -\left\langle W_{k}\right\rangle \right|\right)\\
 & =2n-2+\underbrace{\left|\left\langle V_{k}\right\rangle -\left\langle W_{k}\right\rangle \right|-\sum_{i=1}^{n}\left|\left\langle V_{i}\right\rangle -\left\langle W_{i\oplus1}\right\rangle \right|-\sum_{i\ne k}\left|\left\langle V_{i}\right\rangle -\left\langle W_{i}\right\rangle \right|.}_{\le0\,\text{(by the triangle inequality \eqref{eq:cyclic-triangle})}}
\end{align*}
Thus, by By Lemma~\ref{lem:s0_s1_split} and Theorem~\ref{thm:A1-An},
a joint exists.
\end{proof}
Now we are ready to prove the main result. It suffices to prove the
following partial statement (Theorem~\ref{thm:measure-corr} and
hence also Theorem~\ref{thm:measure} follow immediately):
\begin{thm}
\label{thm:measure-correlations}For each $i=1,\dots,n$ ($n\ge2$),
let the distribution of a pair $\left(V_{i},W_{i\oplus1}\right)$
of $\pm1$-valued random variables be given. If the main criterion
(\ref{eq:master}) does not hold, then the maximum possible value
of
\[
S=\sum_{i=1}^{n}\left\langle V_{i}W_{i}\right\rangle 
\]
over all possible joints of the given pairs is given by
\begin{equation}
M=2n-2-\so\left(\left\langle V_{i}W_{i\oplus1}\right\rangle :i=1,\dots,n\right)\ge n-2.\label{eq:measure}
\end{equation}
\end{thm}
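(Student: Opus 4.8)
The plan is to prove the two inequalities $S\le M$ (for every joint) and $S=M$ (for some joint) separately; the first is routine, and the second is where the work lies. For the upper bound, observe that any joint of the given bunches is a joint distribution of the cyclically arranged $2n$-tuple $V_1,W_2,V_2,W_3,\dots,V_n,W_1$, whose $2n$ consecutive product expectations are exactly the $n$ bunch expectations $\langle V_iW_{i\oplus1}\rangle$ and the $n$ connection expectations $\langle V_iW_i\rangle$. By Theorem~\ref{thm:A1-An} such a joint can exist only if $\so$ of all $2n$ of these numbers is at most $2n-2$; by Lemma~\ref{lem:s0_s1_split} this forces $\so(\langle V_iW_{i\oplus1}\rangle:i)+\se(\langle V_iW_i\rangle:i)\le 2n-2$, and since the all-$+1$ sign pattern is admissible in the definition of $\se$, we have $\se(\langle V_iW_i\rangle:i)\ge\sum_i\langle V_iW_i\rangle=S$; hence $S\le 2n-2-\so(\langle V_iW_{i\oplus1}\rangle:i)=M$. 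The bound $M\ge n-2$ is immediate, since $\langle V_iW_{i\oplus1}\rangle\in[-1,1]$ gives $\so(\langle V_iW_{i\oplus1}\rangle:i)\le n$.

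For achievability, I would first use the gauge freedom of Lemmas~\ref{lem:WLOG} and~\ref{lem:WLOG-permutation} to assume without loss of generality that the bunch expectations satisfy condition~1 of Lemma~\ref{lem:interesting-trivial} with index $n$, so that Lemma~\ref{lem:expand_s0_s1} expresses $\se$ and $\so$ of the bunch expectations cleanly in terms of $\sum_i\langle V_iW_{i\oplus1}\rangle$ and $\langle V_nW_1\rangle$. Since the main criterion~(\ref{eq:master}) is assumed false, Theorem~\ref{thm:equivalence} tells us that (\ref{eq:simpleform}) is also false, i.e.\ $\sum_i\bigl(1-|\langle V_i\rangle-\langle W_i\rangle|\bigr)>M$; moreover the maximal connection values $c_i:=1-|\langle V_i\rangle-\langle W_i\rangle|$ must themselves satisfy condition~1 of Lemma~\ref{lem:interesting-trivial}, for otherwise Lemma~\ref{lem:interesting-trivial} would give the hypothesis~(\ref{eq:trivialcasecondition}) of Lemma~\ref{lem:trivialcase}, and that lemma together with Theorem~\ref{thm:main-criterion} would contradict the failure of the main criterion.

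Next, by Theorem~\ref{thm:A1-An} and Lemma~\ref{lem:s0_s1_split}, a joint with prescribed connection expectations $(r_1,\dots,r_n)$ and the marginals inherited from the bunches exists iff $d_i\le r_i\le c_i$ for each $i$ (with $d_i:=|\langle V_i\rangle+\langle W_i\rangle|-1$) together with $\so(\text{bunches})+\se(r)\le 2n-2$ and $\se(\text{bunches})+\so(r)\le 2n-2$. Restricting attention to vectors $r$ obeying condition~1 of Lemma~\ref{lem:interesting-trivial}, Lemma~\ref{lem:expand_s0_s1} reduces the first inequality to $\so(\text{bunches})+\sum_ir_i\le 2n-2$, which holds with equality as soon as $\sum_ir_i=M$, and reduces the second to a single lower bound on $\min_ir_i$. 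Thus everything comes down to producing a vector $r$ with $\sum_ir_i=M$, with $d_i\le r_i\le c_i$, satisfying condition~1, and with $\min_ir_i$ above that threshold. Here I would split into two sub-cases: if there is a connection vector obeying Lemma~\ref{lem:joint-of-two} and the restriction~(\ref{eq:restriction}) with $S\le M$, then Lemma~\ref{lem:S=00003DM} — whose proof is a connectedness and intermediate-value argument along the segment from that vector to the all-$c_i$ vector — yields a joint with $S=M$, and combined with the upper bound we get $S_{\max}=M$. In the complementary (degenerate) case the joint must be built by hand: drop one connection edge of the $2n$-cycle, set the remaining $n-1$ connections to their maxima $c_j$ (which still satisfy condition~1 of Lemma~\ref{lem:interesting-trivial}, being obtained from the $c_i$ by deleting a coordinate), and apply Lemma~\ref{lem:chain-joint-range}: the dropped connection expectation can be set to $(2n-2)-\so(\text{bunches},\,c_j:j\neq\text{dropped})$, this lies in the allowed range $[d,c]$ by the ``ranges intersect'' clause of that lemma, and — using that $\se$ of the retained $c_j$ equals their sum, together with Lemma~\ref{lem:s0_plus_s1} — it works out to give $\sum_ir_i=M$ exactly.

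The main obstacle is precisely this last step. Verifying the identity $\sum_ir_i=M$ and the membership in $[d,c]$ for the hand construction reduces to a bundle of triangle and absolute-value inequalities in the spirit of Lemma~\ref{lem:inequalities}, and one must choose which connection edge to drop so that, whenever Lemma~\ref{lem:S=00003DM} is not applicable, the hand construction does succeed; getting this bookkeeping exactly right is the real content of the proof.
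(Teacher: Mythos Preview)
Your overall strategy matches the paper's: the upper bound $S\le M$ via Theorem~\ref{thm:A1-An} and Lemma~\ref{lem:s0_s1_split} is identical, and for achievability you correctly single out Lemma~\ref{lem:S=00003DM} as the engine, fed by some initial admissible connection vector with $S\le M$, with a direct chain construction (Lemma~\ref{lem:chain-joint-range}) as a fallback. The paper does exactly this, but organises the achievability into four explicit cases rather than your abstract dichotomy ``a vector satisfying~(\ref{eq:restriction}) with $S\le M$ exists / does not exist''; Cases~1--3 are three different recipes for producing such a seed vector, and Case~4 is the direct construction.

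Where your sketch diverges is in the hand construction. You propose to set the $n-1$ retained connections to their \emph{individual} maxima $c_j=1-|\langle V_j\rangle-\langle W_j\rangle|$ and then take the dropped one at the chain upper bound. Two things go wrong here. First, the ``ranges intersect'' clause of Lemma~\ref{lem:chain-joint-range} only guarantees that \emph{some} value lies in both intervals, not that the chain upper bound $(2n-2)-\so(\cdot)$ itself lies in $[d_k,c_k]$; so your claimed membership is not justified. Second, your identity $\sum_i r_i=M$ needs the branch $\so(\text{bunches})+\se(c_j:j\ne k)$ to win in Lemma~\ref{lem:s0_s1_split}, and that fails precisely when $\langle V_nW_1\rangle>\min_{j\ne k}c_j$; in that regime your construction yields $S<M$ and you are stuck, since the resulting vector does not satisfy~(\ref{eq:restriction}) either. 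The paper avoids both problems by setting the retained connections to the \emph{common} value $c_k=1-|\langle V_k\rangle-\langle W_k\rangle|$ (not their individual $c_j$), and by restricting this construction to the regime $\langle V_nW_1\rangle\le c_k$ (its Case~4), which guarantees that $\langle V_nW_1\rangle$ remains the smallest argument so Lemma~\ref{lem:expand_s0_s1} expands $\so$ cleanly; the complementary regime $\langle V_nW_1\rangle>c_k$ (Case~3) is handled by a separate seed vector fed into Lemma~\ref{lem:S=00003DM}. Your closing remark that ``getting this bookkeeping exactly right is the real content'' is accurate; the paper's four-case split \emph{is} that bookkeeping, and the specific construction it uses in the fallback case differs from yours in the way just described.
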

\begin{proof}
A larger value cannot be reached since we have
\begin{align*}
 & 2n-2\\
\text{(Theorem\,\ref{thm:A1-An})} & \ge\so\left(\left\langle V_{i}W_{i\oplus1}\right\rangle ,\left\langle V_{i}W_{i}\right\rangle :i=1,\dots,n\right)\\
\text{(Lemma\,\ref{lem:s0_s1_split})} & \ge\so\left(\left\langle V_{i}W_{i\oplus1}\right\rangle :i=1,\dots,n\right)+\se\left(\left\langle V_{i}W_{i}\right\rangle :i=1,\dots,n\right)\\
 & \ge\so\left(\left\langle V_{i}W_{i\oplus1}\right\rangle :i=1,\dots,n\right)+S.
\end{align*}
To show that this value can be reached, without loss of generality
(by Lemmas~\ref{lem:WLOG} and \ref{lem:WLOG-permutation}) assume
\begin{equation}
\left\langle V_{i}W_{i\oplus1}\right\rangle \ge\left|\left\langle V_{n}W_{1}\right\rangle \right|,\quad i=1,\dots,n-1,\label{eq:obs-cond-for-all-cases}
\end{equation}
and since the criterion (\ref{eq:master}) is not satisfied, by Lemma~\ref{lem:trivialcase},
the maximal connection expectations $1-\left|\left\langle V_{i}\right\rangle -\left\langle W_{i}\right\rangle \right|$
satisfy for some $k$ the condition 
\begin{equation}
1-\left|\left\langle V_{i}\right\rangle -\left\langle W_{i}\right\rangle \right|\ge\left|1-\left|\left\langle V_{k}\right\rangle -\left\langle W_{k}\right\rangle \right|\right|,\quad i\ne k.\label{eq:conn-cond-for-all-cases}
\end{equation}
We will show that a specific configuration $\left\{ \left\langle V_{i}W_{i}\right\rangle :i=1,\dots,n\right\} $
exists for which $S=M$ and for which a joint of the pairs $\left(V_{i},W_{i\oplus1}\right)$,
$i=1,\dots,n$, exists. This is shown separately for four cases that
exhaust all possible situations under the present assumptions. The
four cases are defined as follows:

\emph{
\[
\xymatrix{\forall i:\left|\left\langle V_{i}\right\rangle +\left\langle W_{i}\right\rangle \right|-1\le\left|1-\left|\left\langle V_{k}\right\rangle -\left\langle W_{k}\right\rangle \right|\right|?\ar[d]_{No}\ar[dr]_{Yes}\hspace{-8ex}\\
\textnormal{Case 1} & 1-\left|\left\langle V_{k}\right\rangle -\left\langle W_{k}\right\rangle \right|\ge0?\ar[d]_{No}\ar[dl]_{Yes}\\
\left\langle V_{n}W_{1}\right\rangle \le1-\left|\left\langle V_{k}\right\rangle -\left\langle W_{k}\right\rangle \right|?\ar[d]_{No}\ar[dr]_{Yes} & \textnormal{Case 2}\\
\textnormal{Case 3} & \textnormal{Case 4}
}
\]
}

\textbf{Case 1.} We have for some index $j$, 
\begin{equation}
\big|\langle V_{j}\rangle+\langle W_{j}\rangle\big|-1>\left|1-\left|\left\langle V_{k}\right\rangle -\left\langle W_{k}\right\rangle \right|\right|.\label{eq:case1-thm-measure}
\end{equation}
In this case, we define $\left\langle V_{i}W_{i}\right\rangle =1-\left|\left\langle V_{i}\right\rangle -\left\langle W_{i}\right\rangle \right|$
for all $i\ne j$ and choose an arbitrary value $\langle V_{j}W_{j}\rangle\ge\big|\langle V_{j}\rangle+\langle W_{j}\rangle\big|-1$
from the range given by Lemma~\ref{lem:chain-joint-range} (intersected
with the range given by Lemma~\ref{lem:joint-of-two}). Since a joint
exists by Lemma~\ref{lem:chain-joint-range} for this configuration,
we obtain $S\le M$. Furthermore, since (\ref{eq:conn-cond-for-all-cases})
and (\ref{eq:case1-thm-measure}) imply that condition (\ref{eq:restriction})
is satisfied for this configuration, Lemma~\ref{lem:S=00003DM} implies
that a joint exists for some configuration with $S=M$.

\textbf{Case 2.} We have 
\begin{gather}
1-\left|\left\langle V_{k}\right\rangle -\left\langle W_{k}\right\rangle \right|<0,\nonumber \\
\big|\langle V_{j}\rangle+\langle W_{j}\rangle\big|-1\le\left|1-\left|\left\langle V_{k}\right\rangle -\left\langle W_{k}\right\rangle \right|\right|,\quad j\ne k.\label{eq:case2-thm-measure}
\end{gather}
In this case, we choose an index $j\ne k$ and define $\langle V_{j}W_{j}\rangle=\left|1-\left|\left\langle V_{k}\right\rangle -\left\langle W_{k}\right\rangle \right|\right|=-\left(1-\left|\left\langle V_{k}\right\rangle -\left\langle W_{k}\right\rangle \right|\right)$
and $\left\langle V_{i}W_{i}\right\rangle =1-\left|\left\langle V_{i}\right\rangle -\left\langle W_{i}\right\rangle \right|$
for $i\ne j$. These values satisfy condition (\ref{eq:restriction})
by (\ref{eq:conn-cond-for-all-cases}) and the condition of Lemma~\ref{lem:joint-of-two}
by (\ref{eq:conn-cond-for-all-cases}) and (\ref{eq:case2-thm-measure}).
Now $\langle V_{j}W_{j}\rangle+\left\langle V_{k}W_{k}\right\rangle =0$
implying that $S\le n-2\le M$ and so by Lemma~\ref{lem:S=00003DM}
a joint exists for a configuration with $S=M$.

\textbf{Case 3.} We have 
\begin{gather}
0\le1-\left|\left\langle V_{k}\right\rangle -\left\langle W_{k}\right\rangle \right|<\left\langle V_{n}W_{1}\right\rangle ,\nonumber \\
\left|\left\langle V_{i}\right\rangle +\left\langle W_{i}\right\rangle \right|-1\le1-\left|\left\langle V_{k}\right\rangle -\left\langle W_{k}\right\rangle \right|,\quad i=1,\dots,n.\label{eq:case3-thm-measure}
\end{gather}
In this case, we define $\left\langle V_{i}W_{i}\right\rangle =1-\left|\left\langle V_{k}\right\rangle -\left\langle W_{k}\right\rangle \right|\ge0$
for all $i=1,\dots,n$ and condition (\ref{eq:restriction}) is satisfied
trivially. These values also satisfy the condition of Lemma~\ref{lem:joint-of-two}
by (\ref{eq:conn-cond-for-all-cases}) and (\ref{eq:case3-thm-measure}).
Furthermore, by (\ref{eq:obs-cond-for-all-cases}) and (\ref{eq:case3-thm-measure}),
all product expectations $\left\{ \left\langle V_{i}W_{i\oplus1}\right\rangle ,\left\langle V_{i}W_{i}\right\rangle :i=1,\dots,n\right\} $
are nonnegative and the smallest value among them is $\left\langle V_{i}W_{i}\right\rangle =1-\left|\left\langle V_{k}\right\rangle -\left\langle W_{k}\right\rangle \right|$.
So, by Lemma~\ref{lem:expand_s0_s1}, we can expand 
\begin{align*}
 & \so\left(\left\langle V_{i}W_{i\oplus1}\right\rangle ,1-\left|\left\langle V_{k}\right\rangle -\left\langle W_{k}\right\rangle \right|:i=1,\dots,n\right)\\
 & =\sum_{i=1}^{n}\left\langle V_{i}W_{i\oplus1}\right\rangle +(n-2)\left(1-\left|\left\langle V_{k}\right\rangle -\left\langle W_{k}\right\rangle \right|\right)\\
 & \le n+(n-2)\le2n-2.
\end{align*}
Thus, a joint exists by Theorem~\ref{thm:A1-An} which implies that
$S\le M$ and so by Lemma~\ref{lem:S=00003DM} a joint exists for
a configuration with $S=M$.

\textbf{Case 4.} We have 
\begin{gather}
\left\langle V_{n}W_{1}\right\rangle \le1-\left|\left\langle V_{k}\right\rangle -\left\langle W_{k}\right\rangle \right|\ge0,\label{eq:case4-thm-measure}\\
\left|\left\langle V_{i}\right\rangle +\left\langle W_{i}\right\rangle \right|-1\le1-\left|\left\langle V_{k}\right\rangle -\left\langle W_{k}\right\rangle \right|,\quad i=1,\dots,n.\nonumber 
\end{gather}
In this case, we define $\left\langle V_{i}W_{i}\right\rangle =1-\left|\left\langle V_{k}\right\rangle -\left\langle W_{k}\right\rangle \right|\ge\left|\left\langle V_{i}\right\rangle +\left\langle W_{i}\right\rangle \right|-1$
for $i\ne k$ and (\ref{eq:conn-cond-for-all-cases}) implies that
these values satisfy the condition of Lemma~\ref{lem:joint-of-two}.
Thus, a joint exists by Lemma~\ref{lem:chain-joint-range} for
\begin{equation}
\left\langle V_{k}W_{k}\right\rangle =\min\begin{cases}
2n-2-\so\big(\left\langle V_{i}W_{i\oplus1}\right\rangle :i=1,\dots,n;\left\langle V_{i}W_{i}\right\rangle :i\ne k\big),\\
1-\left|\left\langle V_{k}\right\rangle -\left\langle W_{k}\right\rangle \right|.
\end{cases}\label{eq:case1-conn-k}
\end{equation}

Assume first that the top expression in (\ref{eq:case1-conn-k}) is
the minimum. Since here $\left\langle V_{n}W_{1}\right\rangle $ is
by (\ref{eq:obs-cond-for-all-cases}) and (\ref{eq:case4-thm-measure})
the smallest argument to $\so\left(\dots\right)$ in (\ref{eq:case1-conn-k})
and all other arguments are nonnegative Lemma~\ref{lem:expand_s0_s1}
yields
\begin{align*}
 & \so\big(\left\langle V_{i}W_{i\oplus1}\right\rangle :i=1,\dots,n;\left\langle V_{i}W_{i}\right\rangle :i\ne k\big)\\
 & =\sum_{i=1}^{n-1}\left\langle V_{i}W_{i\oplus1}\right\rangle -\left\langle V_{n}W_{1}\right\rangle +(n-1)\left(1-\left|\left\langle V_{k}\right\rangle -\left\langle W_{k}\right\rangle \right|\right)\\
 & =\so\left(\left\langle V_{i}W_{i\oplus1}\right\rangle :i=1,\dots,n\right)+(n-1)\left(1-\left|\left\langle V_{k}\right\rangle -\left\langle W_{k}\right\rangle \right|\right)\\
 & =\so\left(\left\langle V_{i}W_{i\oplus1}\right\rangle :i=1,\dots,n\right)+S-\left\langle V_{k}W_{k}\right\rangle .
\end{align*}
 Substituting this in (\ref{eq:case1-conn-k}) yields
\[
S=2n-2-\so\left(\left\langle V_{i}W_{i\oplus1}\right\rangle :i=1,\dots,n\right)=M.
\]

Suppose then that the bottom expression in (\ref{eq:case1-conn-k})
is the minimum. Then, we have 
\[
\left\langle V_{i}W_{i}\right\rangle =1-\left|\left\langle V_{k}\right\rangle -\left\langle W_{k}\right\rangle \right|\ge0,\quad i=1,\dots,n
\]
and condition (\ref{eq:restriction}) is satisfied and since a joint
exists, it follows $S\le M$. But then by Lemma~\ref{lem:S=00003DM}
a joint exists for a configuration with $S=M$.
\end{proof}

\section{Conclusion}

We have derived the formula (\ref{eq:conjectured cntx}) for contextuality
measure in cyclic systems with binary variables, previously conjectured
based on the principles laid out in Refs. \cite{DKL2015,KDL2015}.
The measure is based on the principled comparison of the minimal values
$\Delta_{0}$ and $\Delta_{\min}$ of 
\[
\Delta=\sum_{i=1}^{n}\Pr\left[V_{i}\ne W_{i}\right],
\]
with $\Delta_{0}$ computed across all couplings for the individual
pairs $\left\{ V_{i},W_{i}\right\} $ (i.e., by minimizing each $\Pr\left[V_{i}\ne W_{i}\right]$
separately), and with $\Delta_{\min}$ computed across all couplings
for the entire set of the random variables $\left\{ V_{i},W_{i}:i=1,\ldots,n\right\} $.

A logical consequence of this measure is the criterion (necessary
and sufficient condition) for contextuality (\ref{eq:conjectured criterion}).
In this paper, however, we proved (\ref{eq:conjectured criterion})
by showing its equivalence to the criterion (\ref{eq:proved criterion})
derived in Ref. \cite{KDL2015}, and used this equivalence to derive
(\ref{eq:conjectured cntx}).

\subsubsection*{Acknowledgments}

This work is supported by NSF grant SES-1155956, AFOSR grant FA9550-14-1-0318, and A. von Humboldt Foundation.
The authors benefited from collaboration with Acacio de Barros, Gary
Oas, and Jan-Åke Larsson.

\end{document}